\newtheorem{theorem}{Theorem}
\newtheorem{corollary}{Corollary}
\newtheorem{proposition}{Proposition}
\newtheorem{definition}{Definition}
\newtheorem{remark}{Remark}
\newtheorem{example}{Example}
\DeclareMathOperator{\Di}{Diag}
\def\ps@IEEEtitlepagestyle{%
  \def\@oddfoot{\mycopyrightnotice}%
  \def\@evenfoot{}%
}
\def\mycopyrightnotice{%
  \begin{minipage}{\textwidth}
  \centering \scriptsize
  Copyright~\copyright~2021 IEEE. Personal use of this material is permitted. Permission from IEEE must be obtained for all other uses, in any current or future media, including reprinting/republishing this material for advertising or promotional purposes,creating new collective works, for resale or redistribution to servers or lists, or reuse of any copyrighted component of this work in other works. 
  \end{minipage}
}
\begin{document}
%
\title{
A Dynamic Game Framework for Rational and Persistent Robot Deception with an Application to Deceptive Pursuit-Evasion
}
%
%
%

\author{Linan~Huang,~\IEEEmembership{Student Member,~IEEE,}
        Quanyan~Zhu,~\IEEEmembership{Member,~IEEE}
   \thanks{This  paper  has  been  accepted  for  publication  in IEEE Transactions on Automation Science and Engineering}
\thanks{
This research is partially supported by awards ECCS-1847056, CNS-1544782, CNS-2027884, and SES-1541164 from National Science of Foundation (NSF), and grant W911NF-19-1-0041 from Army Research Office (ARO). 
}      

\thanks{L. Huang and Q. Zhu are with the Department of Electrical and Computer Engineering,  New York University,  370 Jay Street, Brooklyn, NY 11201, USA; E-mail: \{lh2328,qz494\}@nyu.edu}

\thanks{Digital Object Identifier 10.1109/TASE.2021.3097286}

}
\maketitle

\begin{abstract}
This paper studies rational and persistent deception among intelligent robots to enhance security and operational efficiency. 
We present an N-player K-stage game with an asymmetric information structure where each robot's private information is modeled as a random variable or its type. 
The deception is persistent as each robot's private type remains unknown to other robots for all stages. 
 The deception is rational as robots aim to achieve their deception goals at minimum cost.  
Each robot forms a {dynamic belief of others' types  based on intrinsic or extrinsic information.}  
 Perfect Bayesian Nash Equilibrium (PBNE) is a natural solution concept for dynamic games of incomplete information. 
 {Due to its requirements of sequential rationality and belief consistency, PBNE} provides a reliable prediction of players' actions, beliefs, and expected cumulative costs over the entire K stages. 
 The contribution of this work is fourfold. 
 First, we identify the PBNE computation as a nonlinear stochastic control problem and characterize the structures of players' actions and costs under PBNE. 
 We further derive a set of extended Riccati equations with \textcolor{black}{cognitive coupling}
  under the linear-quadratic setting and extrinsic belief dynamics. 
 Second, we develop a receding-horizon algorithm with low temporal and spatial complexity to compute PBNE under intrinsic belief dynamics. 
 Third, we investigate a deceptive pursuit-evasion game as a case study and use numerical experiments to corroborate the results.    
 Finally, we propose metrics, such as deceivability, reachability, and the price of deception, to evaluate the strategy design and the system performance under deception. 
 
\end{abstract}

\renewcommand{\abstractname}{Note to Practitioners}
\begin{abstract}
Recent advances in automation and adaptive control in multi-agent systems enable robots to use deception to accomplish their objectives. 
Deception involves intentional information hiding to compromise the security and operational efficiency of the robotic systems. 
This work proposes a dynamic game framework to quantify the impact of deception, understand the robots' behaviors and intentions, and design cost-efficient strategies under the deception that persists over stages. Existing researches on robot deception have relied on experiments while this work aims to lay a theoretical foundation of deception
{with quantitative metrics,} such as deceivability and the price of deception. 
The proposed model has wide applications, including cooperative robots, pursuit and evasion, and human-robot teaming. 
The pursuit-evasion games are used as case studies to show how the deceiver can amplify the deception by belief manipulation and how the deceived robots can reduce the negative impact of deception by enhanced maneuverability and Bayesian learning.
The future work would focus on \textcolor{black}{designing}
 cooperative deception among swarm robotics and  robotic systems that are robust to or further benefit from deception.  
\end{abstract}

\begin{IEEEkeywords}
Robot deception, perfect Bayesian equilibrium, pursuit-evasion, linear-quadratic games, 
 discrete-time Riccati equations  
\end{IEEEkeywords}

%
\IEEEpeerreviewmaketitle

\section{Introduction}
%
%
%
%
\IEEEPARstart{D}{eception}  is a ubiquitous phenomenon in biology \cite{smith2004we}, military \cite{howard1995strategic}, politics and media \cite{cowen2017network}, and cyberspace \cite{al2019dynamic}. 
In particular, deception plays an increasingly significant role in cyber-physical systems, including autonomous vehicles and robots driven by artificial intelligence (AI). 
Recent advances in these AI-enabled technologies have not only allowed robots to adapt to the dynamic environment via real-time observations, but also made them deceivable.  
A deceiver can intentionally hide or reveal selected information to alter the \textcolor{black}{beliefs and behaviors} of the target robots for a higher reward. 
Since deception has many forms and delivery methods, understanding deception in a unified and quantitative framework is an indispensable step toward assessing the outcomes, measuring the impact, and designing strategies. 
This work aims to design  robots 
that can interact with others efficiently under deceptive environments. 

We identify the following challenges and features of robot deception. First, by definition, deception involves at least two participants interacting with each other. 
An intelligent robot should further consider other participants' rationality, predict their potential deceptive behaviors, and adjust its actions accordingly to alleviate the negative effect of deception. 
Second, due to the robots' dynamic nature, one-shot deception can exert a subsequent influence. The participating robots need to form long-term objectives to deceive or counter-deceive other robots. 
The multi-stage interactions also make it possible for the deceiver to apply deception at different stages. 
Third, each robot contains heterogeneous private information, which results in an asymmetric cognition structure; i.e., robots can form different beliefs over the same piece of unknown information. 
Thus, besides the couplings of state dynamics and costs, the multi-agent system further has \textit{cognitive coupling}; i.e., each robot's behaviors are not only affected by its own belief but also the beliefs of the others. 

To capture these features, we model the deceptive interaction between $N$ strategic robots as a dynamic game of incomplete information. 
During the finite $K$ stages of interaction, $N$ robots accomplish non-cooperative tasks such as  pursuit-evasion in the battlefield \cite{li2011defending}  or cooperative tasks such as collective towing  \cite{sreenath2013dynamics}. 
Robots introduce deception in the above interacting scenarios due to antagonism, selfishness, and privacy concerns. 
Following Harsanyi's approach \cite{harsanyi1967games}, we capture each robot's private information by a random variable. 
The realization of the  random variable, which is called the robot's \textit{type}, is known only to itself, while the support of the  random variable, which contains all its possible types, is known to all robots. 
Take the pursuit-evasion scenario as an example, due to the constraints of weather, terrain, and weapon, both the evading and the pursuing robots know the feasible beachheads for the evader to land on. 
However, the evader chooses only one beachhead {as his true target and the evader's choice, i.e., his type, is unknown to the pursuer. 
The pursuer in the battlefield \textcolor{black}{knows} the existence of the deception and \textcolor{black}{learns} to counter the deception by forming and updating her belief based on real-time observations.}  
Since these tasks are usually time-constrained, robots cannot wait and freeze until they have learned the true type. Instead, they have to  take concurrent actions while the deceiver's type remains uncertain. 

{
We consider two classes of belief dynamics based on whether robots exploit the intrinsic information such as the prediction of other robots' actions, or the extrinsic information to update their beliefs.} 
Each robot aims to minimize its expected cumulative cost over $K$ stages. 
Since the expectation involves its {
$K$-stage belief sequence of other players' private types, its actions should be sequentially rational under its belief sequence and the belief sequence should be consistent with the belief dynamics \textcolor{black}{as well}. 
These two requirements lead to the solution concept of Perfect Bayesian Nash Equilibrium (PBNE) where a player's unilateral deviation from the equilibrium increases his long-run cost. 
By appending the belief state (i.e., all players' beliefs under all possible types) to the system state, the PBNE computation is equivalent to a multi-agent nonlinear stochastic control problem and the method of dynamic programming applies. Without loss of generality, we characterize the structure of the action and the cost under PBNE as a feedback function of the belief state and the system state at the current stage.  
To provide an offline evaluation metric of the equilibrium cost under incomplete information, we use the expected equilibrium cost under complete information as a benchmark and define the \textit{Price of Deception} (PoD).}  

{
Due to their tractability and generality, we focus on  \textcolor{black}{incomplete-information} Linear-Quadratic (LQ) games with \textit{extrinsic belief dynamics} to obtain the PBNE action that is unique and affine to the system state.}  
We obtain a set of extended Riccati equations, which explicitly characterizes the coupling in the state dynamics, costs, and cognition of all robots. 
Under proper decoupling structures, the extended Riccati equations degenerate to the classical Riccati equations for the problems of optimal control \textcolor{black}{or  complete-information LQ games}. 
Under the  \textcolor{black}{incomplete-information} LQ \textcolor{black}{games} with \textit{intrinsic belief dynamics}, the equilibrium action is in general not affine feedback of the system state. 
Thus, we adopt a receding-horizon approach to provide a reasonable approximation of PBNE; i.e.,  instead of offline planning of all $K$-stage actions before the game starts, players recompute their actions based on the real-time observations and their updated beliefs at each new stage during the interaction. 

Finally, we investigate a target protection problem where an evader aims to deceptively reach one of the possible targets and simultaneously evade the pursuer. The game has doubled-sided asymmetric information.
The evader's private or hidden information is his true target while the pursuer's private information is her capability to maneuver or the \textit{maneuverability}. 
We propose multi-dimensional metrics, including the \textit{stage of truth revelation}  and the \textit{endpoint distance}, to assess the deception impact.   
We define the concept of \textit{deceivability} to characterize the fundamental limits of deception and investigate how it is affected by the  \textit{distinguishability} of the private information.  
We compare the proposed control policy with two heuristic polices to demonstrate its efficacy to counter deception at a much lower cost.  
We show that Bayesian learning can significantly reduce the impact of initial belief manipulation and result in a win-win situation for some cases.  
The increase of the pursuer's maneuverability  improves her control performance under deception yet has a marginal effect. 
We also find that applying deception to counter deception is not always effective; e.g., it can be beneficial for a \textcolor{black}{less maneuverable} pursuer to disguise as a \textcolor{black}{more maneuverable} pursuer but not vice versa.  
The numerical results corroborate that PoD can exceed $1$; i.e., deception among players may not only benefit the deceiver but also the deceivee. 

\subsection{Related Works}
The secure and efficient operation of robots, autonomous vehicles, and industrial control systems is vital for recent advances in technologies. 
Many works \cite{thing2016autonomous,huangdynamic,zhao2020finite} have investigated how to  protect these systems  from \textcolor{black}{various} attacks on \textcolor{black}{sensor} measurements \cite{9177270}, communication channels  \cite{bhattacharya2010game}, and control signals  \cite{zhang2017stealthy,8709954}. 
Deception is a key feature of sophisticated attacks with a focus on intentionally hiding private information \cite{ayub2019adaptive,HuangAPT}, introducing randomness \cite{9029607}, and  manipulating other players' beliefs \cite{ornik2018deception,horak2017manipulating}. 
Deception in robotic systems can be conducted  through visual displays \cite{brewer2006visual}, 
facial expressions and body gestures \cite{shim2014other}, and trajectories \cite{dragan2014analysis,ayub2019adaptive}. 
Existing works on robot deception are largely based on experimental approaches \cite{ayub2019adaptive,shim2013taxonomy,wagner2011acting}. There is a need for a formal and quantitative framework to assess the deception impact, understand the fundamental limit and \textcolor{black}{tradeoff} of deception, and determine real-time strategies. 
Compared to the theoretical works of deceptive path planning and goal recognition \cite{Pmaster2019,e22010088}, which focus on identifying the \textcolor{black}{true target} behind deception, our work further determines optimal and cost-effective \textcolor{black}{control policies} to counteract deception \textcolor{black}{and physically protect the true target}; e.g.,  \textcolor{black}{the pursuer adopts the action sequence of minimum cost to reach and protect the true beachhead selected by the evader.}  
\textcolor{black}{Compared to control-theoretic deception frameworks based on Markov decision processes \cite{9029607,ornik2018deception} and stochastic games \cite{Deceptionwill}, we adopt a state-space representation to better characterize the physical dynamics of robots and autonomous vehicles.} 

Game models such as hypergames \cite{bennett1980hypergames}, dynamic Bayesian games \cite{HuangAPT}, partially observable stochastic games \cite{horak2017manipulating,He2018}, and games that involve signaling mechanisms \cite{pawlick2015deception,huang2020gameV2} 
 have been adopted as natural analytic paradigms to understand deception between  intelligent players. 
\textcolor{black}{The} computation of  equilibrium solutions for dynamic games of incomplete information, especially ones with non-classical information structure \cite{sandell1974solution}, is often a challenging task. 
 Previous works have adopted conjugate prior assumptions to simplify Bayesian update and decouple the forward type estimation and backward action optimization under a finite state space and a continuous type space \cite{huang2018analysis,huang2019adaptive}. To solve the coupling between players' belief dynamics and the multi-agent optimal control problem in the context of robotic systems where states are continuous and constrained by physical dynamics with noises, we adopt a receding-horizon approach to compute PBNE, which yields computationally tractable online strategies for the players. Similar receding-horizon approaches have been used in other contexts, including cyber-physical systems \cite{8704275}, military air operation \cite{cruz2002moving}, and autonomous racing \cite{liniger2019noncooperative}. 

\subsection{Notations and Organization of the Paper}
Calligraphic letter $\mathcal{A}$ defines a  set and $|\mathcal{A}|$ represents its cardinality. Define $\mathcal{B} \setminus \mathcal{A}$ as the set of elements in $\mathcal{B} $ but not in $\mathcal{A}$. 
\textcolor{black}{The Euclidean norm of a vector $x$ is represented by $||x||_2$.} 
Let $\mathbb{E}_{a\sim A}[f(a)]$ denote the expectation of $f(a)$ over random variable $a$ whose probability distribution is $A$. 
Let $'$ represent matrix transpose and $\Di[a_1,\cdots,a_N]$ represent \textcolor{black}{a} block diagonal matrix with \textcolor{black}{possibly non-square matrices} $a_i, i\in \mathcal{N}$, \textcolor{black}{on its diagonal}. 
Define $\{a_i\}_{i\in \mathcal{N}}:=\{a_1,\cdots,a_N\}$ as \textcolor{black}{a} set of $N$ elements,  
$[a_i]_{i\in \mathcal{N}}:=[a_1,\cdots,a_N]$ as $N$ \textcolor{black}{block matrices of the same number of rows} arranged in one row vector, and $[a_1;\cdots;a_N]=[a_1,\cdots,a_N]'$ as $N$ \textcolor{black}{block matrices of the same number of columns}  arranged in one column vector. 
Let $\mathbf{I}_{r}, \mathbf{0}_{m,n}  $ be \textcolor{black}{the} $r \times r$ identity matrix and \textcolor{black}{the} $m\times n$ zero matrix, respectively. 
The superscript $k\in \mathcal{K}$ is the stage index and the subscript $i\in \mathcal{N}$ is the player index. 
We omit a function's arguments when there is no  ambiguity, e.g., $S_i^k :=S_i^k({\beta}_i^{k},\theta_i)$. 
A piece of information for a group of players is called \textit{common knowledge} if all players know it, all players know that all players know it, and so on ad infinitum. We summarize main notations in Table \ref{table:notation}. 

The rest of paper is organized as follows. 
Section \ref{sec: generalmodel} introduces the dynamic game of incomplete information and the solution concept of PBNE. 
To obtain explicit and practical solutions, we consider a class of a linear-quadratic problems in Section \ref{sec:LQ} and obtain a set of  extended Riccati equations.  
We present a case study of deceptive pursuit-evasion in Section \ref{sec:case study} and  Section \ref{sec:conclusion} concludes the paper.

\begin{table}[t]
\centering
\caption{Summary of variables and their meanings. 
\label{table:notation}}
\begin{tabularx}{\columnwidth}{l X} 
     \hline
\textbf{Variable} & \textbf{Meaning} \\ \hline
$\mathcal{N}:=\{1,2,\cdots,N\}$       & Set of $N$ players in the dynamic game \\
$\mathcal{K}:=\{0,1,2,\cdots,K\}$      & Set of $K$ discrete stages in the dynamic game     \\
$\Theta_i:=\{\theta_i^1, \theta_i^2, \cdots,\theta_i^{N_i}\}$   & Set of $N_i$ possible types for player   $i\in \mathcal{N}$  \\
$\theta_i\in \Theta_i$     & Type of player   $i\in \mathcal{N}$    \\
$\theta:=[\theta_1,\cdots,\theta_N]$      & $N$ players' joint type  \\ 
$\Theta_{-i}:=\prod_{j\in \mathcal{N}\setminus \{i\}}\Theta_j$ & Set of types of all players except for player $i$ \\
$\theta_{-i}:=[\theta_j]_{j\in \mathcal{N}\setminus \{i\}}\in \Theta_{-i}$ & Types of all players except for player $i$\\
$\Delta (\Theta_{-i})$     & Set of probability distributions over set $\Theta_{-i}$  \\
$\Xi_{i}(\cdot)$  &  Probability distribution of player $i$'s type\\
$\Xi=[\Xi_{i}]_{i\in \mathcal{N}}$  &  Probability distribution of the joint type $\theta$ \\
$\Xi_{w}(\cdot)$  &  Probability distribution of noise $w^k, \forall k\in \mathcal{K}$\\
$x^k\in \mathbb{R}^{n\times 1}$      & System state of dimension ${n}$  at stage $k$  \\ 
$x_i^k\in \mathbb{R}^{n_i\times 1}$    & Player $i$'s state of dimension $n_i$ at stage $k$ \\
\textcolor{black}{$[\hat{x}_{i}^k(\theta_i) ]_{k\in \mathcal{K}}$} &  \textcolor{black}{Reference trajectory for player $i$ of type $\theta_i$}\\
\textcolor{black}{$\beta^k_i\in \Lambda_i \subseteq [0,1]^{|\Theta_{-i}| \times |\Theta_i|}$}      & \textcolor{black}{Player $i$'s belief state at stage $k$} \\
\textcolor{black}{$\beta^k=[\beta_i^k]_{i\in \mathcal{N}}\in \Lambda$ }     & \textcolor{black}{$N$ players' joint belief state at stage $k$}\\
$h^k:=[x^0,\cdots,x^k]\in \mathcal{H}^k$     & State history     \\
$f^k$ & State transition function at stage $k$ \\
$\Gamma^k_i$ & Player $i$'s belief transition function at stage $k$ \\
\textcolor{black}{$g^k_i$} & \textcolor{black}{Player $i$'s cost function at stage $k$} \\
\textcolor{black}{$V_i^k(\beta^k,x^k,\theta_i)$} & \textcolor{black}{Player $i$'s PBNE cost} \\
\textcolor{black}{$\bar{V}_i^k(x^k,\theta)$} & \textcolor{black}{Player $i$'s PBNE cost when all players' types are \textit{common knowledge}} \\
$u_i^k\in \mathbb{R}^{m_i\times 1}$      &  Player $i$'s action of dimension $m_i$  at stage $k$   \\
$u^k:=[u_1^k,\cdots,u_N^k]$      & $N$ players' joint action at stage $k$  \\
$u_i^{k_0:K}:=[u_i^{k_0}, \cdots, u_i^{K}]$
 & Player $i$'s action sequence from $k_0$ to $K$\\
 $u^{k_0:K}:=[u_i^{k_0:K},u_{-i}^{k_0:K}]$
 & Player $i$'s and all other players' control sequences from stage $k_0$ to $K$\\
$l_i^k(\theta_{-i}|h^k,\theta_i)$      & Player $i$'s belief  at stage $k$, i.e., the probability of other players' types  being $\theta_{-i}$ based on player $i$'s available information of $h^k,\theta_i$ \\
\hline
\end{tabularx}
\end{table}


\section{ Dynamic Game with Private Types }
\label{sec: generalmodel}
{We model deception as a $K$-stage game consisting of $N$ robots as players and \textcolor{black}{each robot has} asymmetric information.} 
Let $\mathcal{N}:=\{1, \cdots, N\}$ be the set of $N$ players and $\mathcal{K}:=\{0,1,2,\allowbreak
\cdots,K\}$  be the set of $K$ discrete stages. Private information of player $i\in \mathcal{N}$, i.e., \textcolor{black}{his} type $\theta_i$, is modeled as the realization of a discrete random variable with a finite support $\Theta_i:=\{\theta_i^1, \theta_i^2, \allowbreak
\cdots,\theta_i^{N_i}\}$ and a prior probability distribution $\Xi_{i}(\cdot)$. Hence, $N_i$ is the number of possible types for player $i$ {and $\Xi_{i}(\theta_i)$ is the probability that player $i$'s type is $\theta_i$. Define shorthand notation $\Xi:=[\Xi_i]_{i\in \mathcal{N}}$ and let $\Theta_{-i}:=\prod_{j\in \mathcal{N}\setminus \{i\}}\Theta_j$ be the set of types of all players except for player $i\in\mathcal{N}$.}   
Each player $i$ knows the value of his own type $\theta_i$, but does not know the values of other players' types $\theta_{-i}:=[\theta_j]_{ j\in \mathcal{N}\setminus \{i\}}\in \Theta_{-i}$, throughout $K$ stages of the game. 
The system state dynamics under $N$ players' joint action $u^k:=[u_1^k,\cdots,u_N^k]$, joint type $\theta:=[\theta_1,\cdots,\theta_N]$,
and an additive \textcolor{black}{external} noise $w^k\in \mathbb{R}^{n \times 1}$ are shown in \eqref{eq:state dynamics}:  
\begin{align}
\label{eq:state dynamics}
\begin{split}
 x^{k+1} =f^k(x^k,u^k_1,\cdots,u^k_N, \theta_1,\cdots,\theta_N)+w^k, k\in \mathcal{K}\setminus \{K\}. 
 \end{split}
\end{align}
The dynamics in (\ref{eq:state dynamics}) can have different interpretations based on applications. 
In the pursuit-evasion scenario as in \cite{li2011defending}, $x_i^k\in \mathbb{R}^{n_i\times 1}$ represents robot $i$'s local states such as its location and speed. The system state $x^k\in \mathbb{R}^{n \times 1}$ can be explicitly represented by $N$ robots' joint state $[x_1^k,\cdots,x_N^k]$ with $n=\sum_{{i}=1}^N n_{{i}}$. 
In the application where $N$ robots cooperatively transport a payload, e.g.,  \cite{hajieghrary2017cooperative,sreenath2013dynamics},  system state $x^k\in \mathbb{R}^{n \times 1}$ represents the payload's location and posture, which does  not explicitly relate to robots' local states. 
The noise sequence $[w^k]_{ k\in \mathcal{K}}$ assumed to  be  independent with probability density function $\Xi_{w}(\cdot)$, i.e., $\mathbb{E}_{w^k,w^h\sim \Xi_{w}}[w^k (w^h)']=0,\forall k\in \mathcal{K}, h\in \mathcal{K}\setminus \{k\}$. 
The noise is not necessarily Gaussian distributed 
but is assumed to have a zero mean, i.e., $\mathbb{E}_{w^k\sim \Xi_w}[w^k]=0, \forall k\in \mathcal{K}$.  
We assume that system dynamics \eqref{eq:state dynamics} are multi-agent controllable as defined in Definition \ref{def:Controllability} {so that players can design their deceptive actions to reach the entire state space in finite stages.}  
\begin{definition}[\textbf{Multi-Agent Controllability}]
\label{def:Controllability}
System dynamics \eqref{eq:state dynamics} are called multi-agent controllable if for any target state $x^k\in \mathbb{R}^{n \times 1}$ at stage $k\in \mathcal{K}\setminus \{0\}$, initial state $x^0\in \mathbb{R}^{n \times 1}$,  and joint type  $\theta \in \Theta$,  there exists a sequence of finite joint actions $u^{0:k}$ that drive the system state from $x^0$ to $x^k$ in expectation. 
\end{definition}

\subsection{Forward Belief Dynamics}
At each stage $k\in \mathcal{K}$, the information available to player $i$ compromises all players' state history $h^k:=[x^0,\cdots,x^k]\in \mathcal{H}^k$ as well as his own type value $\theta_i$. 
Define $\Delta (\Theta_{-i})$ as the set of probability distributions over set $\Theta_{-i}$. 
Each player $i$ at stage $k$ forms a belief $l_i^k: \mathcal{H}^k \times \Theta_i \mapsto \bigtriangleup \Theta_{-i}$ based on his available information. Thus, 
 $l_i^k(\cdot | h^k, \theta_i)$  is a  probability measure of other players' types, i.e.,  $\sum_{{\theta}_{-i}\in \Theta_{-i}} l_i^k({\theta}_{-i} | h^k, \theta_i)=1, \forall h^k\in \mathcal{H}^k, \theta_i\in \Theta_i$. 
{
Define a vector 
\begin{equation*}
    \beta_i^k:=[l_i^k(\theta_{{-i}}|h^k,\theta_i^1),l_i^k(\theta_{{-i}}|h^k,\theta_i^2),\cdots,l_i^k(\theta_{{-i}}|h^k,\theta_i^{N_i})]_{\theta_{-i}\in \Theta_{-i}}
\end{equation*}
as player $i$'s belief state at stage $k\in \mathcal{K}$. 
We assume that the set of belief states is independent of stages, i.e., $\beta_i^k\in \Lambda_i \subseteq [0,1]^{|\Theta_{-i}| \times |\Theta_i|}$. 
Then, we can represent player $i$'s belief dynamics as 
\begin{equation}
\label{eq:belief dynamics}
\beta_i^{k+1}:=\Gamma^k_i (\beta_i^{k}, u^k, w^k, \theta_i), \forall k\in \{0,\cdots,K-1\}.     
\end{equation}
Note that the belief transition function $\Gamma^k_i$ can be different for each $i$ and $k$, i.e., players' belief updates can be heterogeneous and time-varying. 
\textcolor{black}{Define $\beta^k:=[\beta_i^k]_{i\in \mathcal{N}}\in \Lambda:=\prod_{i\in \mathcal{N}}\Lambda_i$.} 
In this work, we assume that the initial beliefs of all players of all types $\beta^0$  and the belief update rules $\Gamma_i^k,\forall i\in \mathcal{N}, \forall k\in \{0,\cdots,K-1\}$, are \textit{common knowledge}. 
In the next two subsections, we provide two specific forms of $\Gamma^k_i$ that rely on \textit{intrinsic} and \textit{extrinsic} information, respectively. }  
{\subsubsection{Bayesian Belief Dynamics}
\label{subsec:bayes belief dyna}
The most common belief update rule $\Gamma^k_i$ in \eqref{eq:belief dynamics} for player $i$ at stage $k+1$ uses Bayesian inference. Given the knowledge of the sequential state observations $x^k,x^{k+1}$ and all players' actions $u^k$, each player $i$ of type $\theta_i\in \Theta_i$ at stage $k+1$ can update his belief as follows: $\forall  \theta_{-i}\in \Theta_{-i}$,}
\begin{equation}
\label{eq:Bayesian}
\begin{split}
&l_i^{k+1}(\theta_{-i} | h^{k+1},\theta_i)\\
&\quad \quad \quad \quad 
=\frac{l_i^{k}(\theta_{-i}|h^k, \theta_i)  \Pr(x^{k+1}| \theta_{-i}, x^k, \theta_i)}{\sum_{\bar{\theta}_{-i}\in \Theta_{-i}} l_i^{k}(\bar{\theta}_{-i}| h^{k},\theta_i)\Pr(x^{k+1}|\bar{\theta}_{-i},x^{k},\theta_i) }. 
\end{split}
\end{equation}
In \eqref{eq:Bayesian}, we use \textcolor{black}{the} Markov property, i.e.,  $ \Pr(x^{k+1}| \theta_{-i}, h^k, \theta_i)=\Pr(x^{k+1}| \theta_{-i}, x^k, \theta_i)=\Xi_w(x^{k+1}-f^k(x^k, u^k,\theta) )$.  
The denominator is positive as $w^k\in \mathbb{R}^{n\times 1}$. 

\begin{remark}[\textbf{Actions Reveal Type Information}]
Even if the state dynamics $f^k$ in \eqref{eq:state dynamics} are independent of $\theta_{j}, \forall j\in \mathcal{N}\setminus \{i\}$, player $i\in \mathcal{N}$ can still learn player $j$' type via \eqref{eq:Bayesian} as player $j$'s action $u^k_j$ is a function\footnote{\textcolor{black}{Each player's action is a function of his type as his cost is related to his type and the action aims to minimize his cost.}} of his type $\theta_j$. 
\end{remark}

{\subsubsection{Markov-Chain Belief Dynamics}
In section \ref{subsec:bayes belief dyna}, we assume that players can exploit the \textit{intrinsic} information of state dynamics $f^k$, state observations $x^k, x^{k+1}$, and the prediction of all players' actions $u^k$. 
Since the above \textit{intrinsic} information may not be available in practice, we consider \textcolor{black}{the} belief dynamics with \textit{extrinsic} information in this subsection. 
In particular, we assume that each player $i$'s belief dynamics $\beta_i^{k+1}:=\Gamma^k_i (\beta_i^{k}, w^k, \theta_i), \forall k\in \{0,\cdots,K-1\}$, are a discrete-time Markov chain where the \textit{extrinsic} information at stage $k$ is characterized by the transition function $\Gamma^k_i(\cdot,w^k, \theta_i)$. 
Note that the transition function only characterizes how players update their beliefs at each stage yet does not guarantee that a player can learn the true types of others. 
The following example illustrates a class of players whose belief dynamics \textcolor{black}{exhibit} the confirmation bias \cite{nickerson1998confirmation} \textcolor{black}{where players tend to ignore intrinsic evidence such as $u^k$ and preserve their belief update rules $\Gamma_i^k$ at each stage $k$}. 
\begin{example}
\label{example:confirmation bias}
Consider a two-person game $N=2$ where the first player has two types $N_1=2, \Theta_1=\{\theta_1^1,\theta_1^2\}$ and the second player only has one type $N_1=1, \Theta_2=\{\theta_2^1\}$. 
The second player's belief state $\beta_2^k=[l_2^k(\theta_1^1|\theta_2^1),l_2^k(\theta_1^2|\theta_2^1)]$ toward the first player's type belongs to a finite set $\Lambda_2=\{[0.2,0.8],[0.5,0.5],[0.8,0.2]\}$. The transition function $\Gamma_2^k$ is independent of $k$: if the current belief state is $[0.5,0.5]$, then the belief at \textcolor{black}{the} next stage is $[0.2,0.8],[0.5,0.5],$ or $[0.8,0.2]$ with probability $0.4,0.2,0.4$, respectively. If the current belief state is $[0.8,0.2]$ (resp. $[0.2,0.8]$), then the belief at \textcolor{black}{the} next stage is $[0.8,0.2]$  (resp. $[0.2,0.8]$) or $[0.5,0.5]$  with probability $0.9$ and $0.1$,  respectively. 
The above transition function $\Gamma_2^k$ means that the second player tends to interpret the extrinsic information of the first player's type based on his current belief. 
If the second player already believes that the first player is of type $\theta_1^1$ with a high probability of $0.8$ at stage $k$,  i.e., \textcolor{black}{$\beta_2^k=[0.8,0.2]$}, then \textcolor{black}{the second player} is more inclined to enhance his current belief,  i.e., his belief \textcolor{black}{state} at \textcolor{black}{the} next stage, \textcolor{black}{i.e.,  $\beta_2^{k+1}$,} will remain \textcolor{black}{to be} $[0.8,0.2]$ with a high probability of $0.9$. 
The above transition function \textcolor{black}{represents} the \textcolor{black}{phenomena} of \textcolor{black}{attitude polarization and confirmation bias where players preserve their existing beliefs and the disagreement becomes more extreme at each stage even when players are exposed to the same evidence.}
\end{example}
}

 \subsection{Nonzero-Sum Cost Function and Equilibrium Concept}
 \label{subsec:cost}
At non-terminal stage $k\in \mathcal{K}\setminus \{K\}$, player $i$'s cost function is $g_i^k: \mathbb{R}^{n \times 1} \times \prod_{{j}=1}^N \mathbb{R}^{m_{{j}}\times 1} \times \Theta_i \mapsto \mathbb{R}$. The final stage cost is $g_i^K: \mathbb{R}^{n \times 1} \times \Theta_i \mapsto \mathbb{R}$. 
Define $u_i^{k_0:K-1}:=[u_i^{k_0}, \cdots, u_i^{K-1}]$ as player $i$'s action sequence from stage $k_0$ to $K-1$ and $u^{k_0:K-1}:=[u_i^{k_0:K-1},u_{-i}^{k_0:K-1}]$
as player $i$'s and all other players' action sequences from stage $k_0$ to $K-1$. 
Player $i$'s  expected cumulative cost from arbitrary initial stage $k_0\in \mathcal{K}$ to the terminal stage $K$ is defined as 
\begin{equation}
\label{eq:cost}
\begin{split}
&J_i^{k_0}(l_i^{k_0:K-1},u^{k_0:K-1}, x^{k_0}, \theta_i)=
\mathbb{E}_{w^{{K-1}} \sim \Xi_w}[g^{{K}}_i(x^K,\theta_i)]\\
& \quad\quad\quad\quad\quad\quad
+ \sum_{{k}={k_0}}^{K-1} \mathbb{E}_{w^{{k-1}} \sim \Xi_w}\left[\mathbb{E}_ {{\theta}_{-i}\sim {l}_i^{{k}}} [ g^{{k}}_i(x^k,u^k,\theta_i)] \right].
\end{split}
\end{equation}
The expectations are taken first over the external noise sequence $w^k$ and then over other players' internal type uncertainty. 
{
We cannot exchange the order of these two expectations as $l_i^k$ is a function of $w^{k-1}$.}  
Each player $i$ at stage $k_0\in \mathcal{K}$ aims to minimize $J_i^{k_0}$ by {choosing only his action sequence $u_i^{k_0:K-1}$ but not other players' action sequence $u_{-i}^{k_0:K-1}$.  
The following definition of sequential rationality in Definition \ref{def:sequentialRationality} guarantees that each player $i$ has no motivation to deviate from the sequentially rational action at any stage $k\in \{k_0,\cdots,K-1\}$ during the interaction if all other players adopt the sequentially rational actions.  
}
\begin{definition}[\textbf{Sequential Rationality}]
\label{def:sequentialRationality}
{An action sequence $u^{*,k_0:K-1}:=\{u_i^{*,k_0:K-1},u_{-i}^{*,k_0:K-1}\}$ is called \textit{sequentially rational} for player $i$ under the belief sequence $l_i^{k_0:K-1}$, state $x^{k_0}$, and type $\theta_i$, 
if for any state $x^k$ at stage $k\in \{k_0, \cdots, K-1\}$, 
player $i$ does not benefit from taking any other action sequence $u_i^{k:K-1}$, i.e., 
$J_i^{{k}}(l_i^{k:K-1}, u_i^{*,k:K-1},u_{-i}^{*,k:K-1}, x^{k} ,\theta_i)\leq J_i^{{k}}(l_i^{k:K-1}, u_i^{k:K-1},u_{-i}^{*,k:K-1}, x^{k} ,\theta_i), \forall u_i^{k:K-1}$.} 
\end{definition}
{Since players' actions may affect their future beliefs as captured by the belief dynamics $\Gamma_i^k$ in \eqref{eq:belief dynamics}, we further require the equilibrium action $u^{*,k_0:K-1}$ in Definition \ref{def:sequentialRationality} to be consistent with the belief dynamics, which leads to the following definition of Perfect Bayesian Nash Equilibrium (PBNE).}  
\begin{definition}[\textbf{Perfect Bayesian Nash Equilibrium}]
\label{def:PBNE}
Consider the $N$-player dynamic game of {private types and asymmetric information} defined by the state dynamics \eqref{eq:state dynamics} and the expected cumulative cost \eqref{eq:cost}. 
The action sequence $u^{*,0:K-1}:=\{u^{*,0:K-1}_i,u^{*,0:K-1}_{-i}\}$ of $N$ players over $K$ stages compromises the Perfect Bayesian Nash Equilibrium (PBNE) if, regardless of each player $i$'s type $ \theta_i\in \Theta_i$, \textcolor{black}{the following statements hold}. 
\begin{enumerate}
    \item \textbf{Sequential rationality}: $u^{*,0:K-1}$ is sequential rational for each player $i\in \mathcal{N}$ under his belief sequence  $l_i^{*,0:K-1}$;  
    \item \textbf{Belief consistency}: each player $i$'s belief sequence $l_i^{*,0:K-1}$ is consistent with \eqref{eq:belief dynamics} under $u^{*,0:K-1}$. 
\end{enumerate}
\end{definition}
{
\begin{proposition}
\label{prop: a feedback policy}
It is sufficient to represent player $i$'s equilibrium cost $J_i^{{k}}(l_i^{*,k:K-1}, u^{*,k:K-1}, x^{k} ,\theta_i)$ under the PBNE action $u^{*,k:K-1}$ at stage $k\in \mathcal{K}$ as a function of $\beta^k$, $x^k$ and $\theta_i$, \textcolor{black}{which is defined} as $V_i^{k}(\beta^{k}, x^{k} ,\theta_i)$. 
\textcolor{black}{Under} the boundary condition $V_i^{K}(\beta^K,x^{K} ,\theta_i):=g_i^K(x^K, \theta_i)$, 
the following holds for all $ k\in \{0,\cdots,K-1\}$ and all $x^k\in \mathbb{R}^{n\times 1}, \beta^k\in \Lambda$, i.e., 
\begin{equation}
\label{eq:DP}
\begin{split}
 & V_i^{k}(\beta^{k}, x^{k} ,\theta_i)  =\min_{u_i^k}  \sum_{{\theta}_{-i} }l_i^k({\theta}_{-i} |h^k, \theta_{i}) \{ g_i^k(x^k, u^k, \theta_i) + \\
 &\quad \quad \quad  \mathbb{E}_{w^k\sim \Xi_w}[V_i^{k+1}({\beta}^{k+1}, x^{k+1},\theta_i)]  \}, \forall \theta_i\in \Theta_i, \forall i\in\mathcal{N},  
\end{split}
\end{equation}
where $\beta^{k+1}$ and $x^{k+1}$ satisfy \eqref{eq:belief dynamics} and \eqref{eq:state dynamics}, respectively. 
\end{proposition}
\begin{proof}
According to the definition of PBNE, at the second last stage $k=K-1$, each player $i$'s equilibrium action $u_i^{*,k}=  
arg\min_{u_i^k} \mathbb{E}_{\theta_{-i}\sim l_i^k} [g_i^k(x^k, u^k, \theta_i)] +
\mathbb{E}_{w^k\sim \Xi_w}[g_i^{K}( x^{K},\theta_i)]$ is in general a function of $\theta_i,x^k,l_i^{*,k},u_{-i}^{*,k}$. Due to the coupling between $u_i^{*,k}$ and $u_{-i}^{*,k}$, we need to solve a  set of system equations for all $i\in \mathcal{N}$ and $\theta_i\in \Theta_i$. Then, $u_i^{*,k}$ will be a function of $\beta^k,x^k,\theta_i$ and we \textcolor{black}{obtain} \eqref{eq:DP} at stage $k=K-1$. We can repeat the above procedure from  $k=K-2$ to $k=0$ to obtain the recursive form in \eqref{eq:DP}.  
\end{proof}
Proposition \ref{prop: a feedback policy} characterizes the structure of the equilibrium action $u_i^{*,k}$ and the equilibrium cost $V_i^{k}(\beta^{k}, x^{k} ,\theta_i)$ for each player $i$ of type $\theta_i$ under the solution concept of PBNE; i.e., both terms are feedback functions of the belief state $\beta^k$, the physical state $x^k$, and the player' type $\theta_i$. 
Although $J_i^k$ is a function of beliefs  $l_i^{k:K-1}$ over all the remaining stages, $V_i^{k}(\beta^{k}, x^{k} ,\theta_i)$ only depends on the belief state at the current stage $k$. 
}
{
If all players' types are \textit{common knowledge}, 
PBNE still applies and we can define a new function $ \bar{V}_i^k(x^k,\theta)$ to represent the resulting equilibrium cost $ {V}_i^k(\beta^k,x^k,\theta_i)$ for all $k\in \mathcal{K}$ without loss of generality. 
}

\subsection{Offline Evaluation of Equilibrium Cost}
{
If each player $i$'s initial belief confirms to the prior distribution of other players' types, i.e., $l_i^0(\theta_j|x^0,\theta_i)=\Xi_{j}(\theta_j), \forall \theta_i\in \Theta_i, j\in \mathcal{N}, \theta_j\in \Theta_j, \forall x^0$, then each player $i$ at system state $x^0$ with belief state $\beta^0$
can use his expected equilibrium cost $\mathbb{E}_{\theta_i\sim \Xi_{i}}[{V}_i^0(\beta^0,x^0,\theta_i)]$ over his type uncertainty $\Xi_i$ as an offline performance measure of the equilibrium action $u^{*,0:K}$. 
As a comparison, player $i$'s expected equilibrium cost $\mathbb{E}_{\theta\sim \Xi}[\bar{V}_i^0(x^0,\theta)]$ under the complete information game serves as a benchmark. 
Note that player $i$ does not need to know the realization of the joint type $\theta$ to compute $\mathbb{E}_{\theta\sim \Xi}[\bar{V}_i^0(x^0,\theta)]$. 
Due to the coupling in dynamics, costs, and cognition among $N$ players, obtaining more information and knowing the type of another player $j\in\mathcal{N}\setminus\{i\}$
may not always improve player $i$'s performance; i.e.,  there is no guarantee that $\mathbb{E}_{\theta_i\sim \Xi_{i}}[{V}_i^0(\beta^0,x^0,\theta_i)] 
\geq \mathbb{E}_{\theta \sim \Xi}[\bar{V}_i^0(x^0,\theta)]$. 
Besides the above performance evaluation for an individual player $i\in \mathcal{N}$ under deception, we may also aim to evaluate the overall performance of multiple players or all $N$ players.}
We define the Price of Deception (PoD) in Definition \ref{def:PoD} with a set of coefficients $\eta_i \in [0,1], \forall i\in \mathcal{N}, \sum_{i\in \mathcal{N}}\eta_i=1$. 
{Since the equilibrium cost can be negative, we let $\eta_0(\Xi):=-\min ( 0, \allowbreak
\{\mathbb{E}_{\theta_i\sim \Xi_{i}}[{V}_i^0(\beta^0,x^0,\theta_i)]\}_{i\in \mathcal{N}}, \allowbreak 
\{\mathbb{E}_{\theta \sim \Xi}[\bar{V}_i^0(x^0,\theta)]\}_{i\in \mathcal{N}} )$ be the normalizing constant to guarantee that $p^{\eta}(\Xi)$ is non-negative for all chosen coefficients $\eta_i, i\in \mathcal{N}$.}  
\begin{definition}[\textbf{Price of Deception}]
\label{def:PoD}
For a given set of coefficients $\eta:=\{\eta_i\}_{i\in \mathcal{N} \cup \{0\}}$, the Price of Deception (PoD) of the $N$-player $K$-stage game defined by \eqref{eq:state dynamics}, \eqref{eq:cost}, and \eqref{eq:belief dynamics} under the prior probability distribution $\Xi=[\Xi_i]_{i\in \mathcal{N}}$ is 
\begin{equation*}
p^{\eta}(\Xi) := \frac{\sum_{i\in \mathcal{N}} \eta_i 
\mathbb{E}_{\theta\sim \Xi}[\bar{V}_i^0(x^0,\theta)]
+\eta_0(\Xi) }{\sum_{i\in \mathcal{N}} \eta_i 
\mathbb{E}_{\theta_i\sim \Xi_{i}}[ {V}_i^0(\beta^0,x^0,\theta_i)]
+\eta_0(\Xi) }\in [0,\infty). 
\end{equation*} 
\end{definition}
The PoD is a crucial evaluation and design metric. 
{We can endow PoD with different meanings by properly choosing the weighting coefficients $\eta_i, i\in \mathcal{N}$. 
For example, if besides $N$ players, there is a central planner who 
aims to minimize the total cost of all $N$ players under their deceptive interaction. Then, we can pick $\eta_i=1/N, i\in \mathcal{N}$, to represent the overall system performance. 
Although the central planner cannot control players' state dynamics, costs, and belief dynamics directly, he can still affect their deceptive interaction if he can design the prior probability distribution $\Xi$ of the joint type $\theta$. 
If the  central planner instead only aims to reduce the cost of one player $j\in \mathcal{N}$, then we can pick $\eta_j=1$ and $\eta_h=0, \forall h\in \mathcal{N}\setminus \{j\}$.}  
With a given weighting parameters $\eta$, a larger value of $p_{\eta}(\Xi)$ indicates a better accomplishment of the above goals. 
Note that individual deception may improve the system performance, i.e., $p^{\eta}(\Xi)>1$. 

\section{Linear-Quadratic Specification }
\label{sec:LQ}
{
Linear-Quadratic (LQ) game is an important class of dynamic games.  
They can also be applied iteratively to approximate nonlinear stochastic systems with general cost functions and obtain equilibrium actions \cite{fridovich2019iterative}.}  
In the following sections, we consider linear state dynamics 
\begin{equation}
\label{eq:linearDynamics}
f^k(x^k, u^k,\theta) :=A^k(\theta)x^k+ \sum_{{i}=1}^N B^k_{{i}}(\theta_{{i}}) u_{{i}}^k, 
\end{equation}
 with stage-varying matrices $A^k(\theta) \in \mathbb{R}^{n\times n }$, $B^k_i(\theta_i)\in \mathbb{R}^{n\times m_i}$. 
{\begin{remark}
\label{thm:LQcontrollable}
System \eqref{eq:linearDynamics} is multi-agent controllable if and only if matrices $H_i^k(\theta):=[B_i^{k-1}(\theta_i), \allowbreak
\cdots, \prod_{{h}=2}^{k-1} A^{{h}} (\theta)B_i^1(\theta_i), \allowbreak
\prod_{{h}=1}^{k-1} A^{{h}}(\theta) B_i^0(\theta_i)], \forall i\in \mathcal{N}, \forall \theta\in \Theta, \forall k\in \mathcal{K}$, are of full rank as noise $w^k$ has zero mean and we can obtain $\mathbb{E} [x^{k}]=\prod_{{h}=0}^{k-1} A^{{h}}(\theta) x^0 \allowbreak +
\sum_{{r}=1}^N H_{{r}}^k(\theta) [u_{{r}}^{k-1};\cdots; u_{{r}}^{0}]$ by induction. 
\end{remark}}
Each player $i$'s cost is quadratic in both $x^k$ and $u^k$; i.e., 
\begin{equation}
\begin{split}
\label{eq:LQcost}
g_i^k(x^k, u^k,\theta_i)=(x^k-\hat{x}_{i}^k(\theta_i))' D_i^k(\theta_i) (x^{k}-\hat{x}_{i}^k(\theta_i) )\\
+\hat{f}_{i}^k(\hat{x}_{i}^k(\theta_i)) + 
\sum_{j=1}^N (u^{k}_j)' F^k_{ij} (\theta_i)u^{k}_j, \forall k\in \mathcal{K},    
\end{split}
\end{equation}
where $[\hat{x}_{i}^k(\theta_i) ]_{k\in \mathcal{K}}$ is a known type-dependent reference trajectory for player $i\in \mathcal{N}$ and $\hat{f}_{i}^k$ is a known function of $\hat{x}_{i}^k(\theta_i)$. 
The cost matrices $D_i^k(\theta_i)\in \mathbb{R}^{n \times n }, F_{ij}^k(\theta_i) \in \mathbb{R}^{m_i \times m_i } , \forall i,j\in \mathcal{N},k\in \mathcal{K}$,  are symmetric. 
At the final stage,  $F^K_{ij}(\theta_i) \equiv \mathbf{0}_{m_i , m_i }, \forall i,j\in \mathcal{N}, \forall \theta_i\in \Theta_i$. 
{We introduce the following three sets of notations for the belief matrix, the extended Riccati \textcolor{black}{equations}, and the matrix-form equilibrium action, respectively.} 

\paragraph{Belief Matrix}
 With a little abuse of notation, 
we can define the marginal probability $l_i^k({\theta}_{j} | h^k, \theta_i):=\sum_{\theta_r\in \Theta_r,r\in \mathcal{N}\setminus \{i,j\} } l_i^k(\theta_{-i} | h^k, \theta_i), \allowbreak
\forall j\in \mathcal{N}\setminus \{i\}$, as the player $i$'s belief toward the player $j$'s type at stage $k$. 
Define the belief matrix for all $i\in \mathcal{N}, j\in \mathcal{N}\setminus \{i\},k\in \{0,\cdots,K-1\}$, \textcolor{black}{as} 
 \begin{equation}
 \label{eq:belief matrix}
 \mathbf{L}_{ij}^{k}:=
\begin{bmatrix}
&\mathbf{L}_i^k(\theta_{j}^1|h^k, \theta_i^1),   &\cdots 
& \mathbf{L}_i^k(\theta_{j}^{N_j}|h^k, \theta_i^1)  \\
& \mathbf{L}_i^k(\theta_{j}^1|h^k, \theta_i^2)  , &\cdots
& \mathbf{L}_i^k(\theta_{j}^{N_j}|h^k, \theta_i^2)  \\
&\vdots &\ddots  &\vdots  \\
& \mathbf{L}_i^k(\theta_{j}^1|h^k, \theta_i^{N_i})   , &\cdots
& \mathbf{L}_i^k(\theta_{j}^{N_j}|h^k, \theta_i^{N_i}) 
\end{bmatrix}, 
 \end{equation}
where each block element 
{$
\mathbf{L}_i^k(\theta_{j}^r|h^k, \theta_i^h)=\Di[{l}_i^k(\theta_{j}^r|h^k, \theta_i^h), \allowbreak
\cdots, \allowbreak
{l}_i^k(\theta_{j}^r|h^k, \theta_i^h)]\in \mathbb{R}^{n \times n }, \forall r\in\{1,\cdots,N_j\}, \forall h\in\{1,\cdots,N_i\}. 
$ }
Since all its elements are positive and all rows sum to one, the belief  matrix $\mathbf{L}_{ij}^{k}$ is a \textit{right stochastic matrix}. 

\paragraph{Extended Riccati Equations}
Let a sequence of symmetric matrices $S_i^k(\beta^{k},\theta_i) \in \mathbb{R}^{n\times n}$, 
vectors $N_i^k(\beta^{k},\theta_i) \in \mathbb{R}^{n\times 1}$, and scalars $q_i^k(\beta^{k},\theta_i) \in \mathbb{R}$ satisfy the following extended Riccati equations 
for all $ \beta^{k}\in \Lambda, i\in \mathcal{N},  \theta_i\in \Theta_i, k\in \{0,\cdots,K-1\}$: 

\begin{equation}
    \label{eq:S}
    \resizebox{.89\hsize}{!}{$
\begin{split}
 S_i^k=&D_i^k+ \mathbb{E}_{\theta_{-i}\sim l_i^k}\bigg[ (A^k+\sum_{j=1}^N B^k_j\Psi^{1,k}_j )' \mathbb{E}_{w^k\sim \Xi_w}[S_i^{k+1}]  \\
 & \cdot (A^k+ \sum_{j=1}^N B^k_j \Psi^{1,k}_j) +\sum_{j=1}^N (\Psi^{1,k}_j )'F_{ij}^k \Psi^{1,k}_j  \bigg],
\end{split}
$}
\end{equation}

\begin{equation}
    \label{eq:N}
    \resizebox{.89\hsize}{!}{$
\begin{split}
 N_i^k=&-2D^k_i\hat{x}_{i}^k+ \mathbb{E}_{\theta_{-i}\sim l_i^k}\bigg[  (\sum_{j=1}^N B^k_j \Psi^{1,k}_j  +  A^k)' ( \mathbb{E}_{w^k\sim \Xi_w}[ N_i^{k+1}] \\
 & + 2 \mathbb{E}_{w^k\sim \Xi_w}[S_i^{k+1}]\sum_{j=1}^N B^k_j \Psi^{2,k}_j  )  + 2\sum_{j=1}^N (\Psi^{1,k}_j  )'F_{ij}^k\Psi^{2,k}_j  \bigg] , 
\end{split}
$}
\end{equation}

\begin{equation}
    \label{eq:q}
    \resizebox{.89\hsize}{!}{$
\begin{split}
 q_i^k= &(\hat{x}_{i}^k)'D^k_i \hat{x}_{i}^k+\hat{f}_{i}^k(\hat{x}_{i}^k) + \mathbb{E}_{w^k\sim \Xi_w}[(w^k)'S_i^{k+1}w^k+q_i^{k+1}] \\
&+\mathbb{E}_{\theta_{-i}\sim l_i^k}\bigg[ (\sum_{j=1}^N B^k_j \Psi^{2,k}_j)' \mathbb{E}_{w^k\sim \Xi_w}[S_i^{k+1}]\sum_{j=1}^N B^k_j \Psi^{2,k}_j\\
& + (\sum_{j=1}^N B^k_j \Psi^{2,k}_j)' \mathbb{E}_{w^k\sim \Xi_w}[N_i^{k+1}]   +\sum_{j=1}^N ( \Psi^{2,k}_j )'F_{ij}^k\Psi^{2,k}_j \bigg],    
\end{split}
$}
\end{equation}
where functions $\Psi^{1,k}_i,\Psi^{2,k}_i, \forall i\in \mathcal{N}$, are defined below. 
The boundary conditions of the extended Riccati equations are 
\begin{equation}
\label{eq:boundary}
    S_i^K=D^K_i;\ N_i^K =-2D_i^K \hat{x}_{i}^K;\  q_i^K =(\hat{x}_{i}^K)' D_i^K \hat{x}_{i}^K+\hat{f}_{i}^K(\hat{x}_{i}^K). 
\end{equation}


\paragraph{Equilibrium Action in Matrix Form}
\label{para: action}
We need to represent the equilibrium action of all players under all types in matrix form as each player's action is coupled with other players' actions under PBNE. 
Since each player $i$ has different equilibrium actions under different types, with a little abuse of notation, we write each player $i$'s action as a function of his type $\theta_i$ and define two action vectors
$ \mathbf{u}_i^k:=[u_i^{k}(\theta^1_i),\cdots,  u_i^{k}(\theta^{N_i}_i)]' \in \mathbb{R}^{m_i N_i\times 1}$ and $\mathbf{u}^k:=[ \mathbf{u}_1^k,\mathbf{u}_2^k\cdots,   \mathbf{u}_N^k]' \in \mathbb{R}^{\sum_{{{r}}=1}^N m_{{r}} N_{{r}} \times 1}$. 
For all $ i\in \mathcal{N}, l_i^k,  \theta_i\in \Theta_i, k\in \{0,\cdots,K-1\}$, define a series of $(m_i)$-by-$(m_i)$ square matrices 
\begin{equation*}
    R^{k}_i({\beta}^{k},\theta_i):=F_{ii}^k(\theta_i)+(B^k_i(\theta_i))' S_i^{k+1}(\beta^{k},\theta_i)B^k_i(\theta_i). 
\end{equation*}
Let $\mathbf{B}_i^k:=\Di[B_i^{k}(\theta_i^1) \cdots, B_i^{k}(\theta_i^{N_i})] $ be $(N_in)$-by-$(N_i m_i)$ block matrices and 
$\mathbf{S}_i^k(\beta^k):=\Di[S_i^{k}(\beta^{k},\theta_i^{1}), \cdots, S_i^{k}(\beta^{k},\theta_i^{N_i})]$ be $(N_in)$-by-$(N_in)$ block matrices. 
Finally,  define  parameter matrices $\mathbf{W}^{1,k}(\beta^k)=[W^{1,k}_1(\beta^k); \cdots;  W^{1,k}_N(\beta^k)] \in \mathbb{R}^{\sum_{{{r}}=1}^N m_{{r}} N_{{r}} \times n}$, $\mathbf{W}^{2,k}(\beta^k)=[W^{2,k}_1(\beta^k);\cdots; W^{2,k}_N(\beta^k)] \in \mathbb{R}^{\sum_{{r}=1}^N m_{{r}} N_{{r}} \times 1}$, and $\mathbf{W}^{0,k}(\beta^k):=[W^{0,k}_{ij}(\beta^k)\in \mathbb{R}^{ m_i N_i \times m_j N_j } ]_{i,j\in \mathcal{N}}$ for any $\beta^k\in \Lambda$. Their elements are given as follows; i.e., $\forall i\in \mathcal{N}, \forall k\in \{0,\cdots,K-1\}$, 
\begin{equation*}
\begin{split}
&W^{1,k}_i(\beta^k)  =\bigg[ 
(B^k_i(\theta^1_i))' S_i^{k+1}(\beta^{k},\theta^1_i)\mathbb{E}_{{\theta}_{-i}\sim {l}_i^k } [A^k(\theta^1_i, \theta_{-i})];\\
&
\quad \quad \quad \quad  \cdots;  (B^k_i(\theta^{N_i}_i))' S_i^{k+1}(\beta^{k},\theta^{N_i}_i)\mathbb{E}_{{\theta}_{-i}\sim {l}_i^k } [A^k(\theta^{N_i}_i, \theta_{-i})] 
\bigg], 
\\
& W^{2,k}_i(\beta^k)  = \frac{1}{2} \bigg[ (B^k_i(\theta^1_i))'N_i^{k+1}(\beta^{k}, \theta^1_i); \\
& \quad \quad \quad \quad \cdots;   (B^k_i(\theta^{N_i}_i))'N_i^{k+1}(\beta^{k}, \theta^{N_i}_i)  
\bigg], \\
&W_{ii}^{0,k}(\beta^k)=\Di[R^{k}_i (\beta^{k}, \theta_i^1), \cdots, R^{k}_i(\beta^{k}, \theta_i^{N_i})  ],  \\
& W_{ij}^{0,k}(\beta^k)=(\mathbf{B}_i^k)'\mathbf{S}_i^{k+1}(\beta^k) \mathbf{L}_{ij}^{k}\mathbf{B}_j^k, \forall j\in \mathcal{N}\setminus \{i\}.
\end{split}
\end{equation*}
Let matrix $\mathbf{M}_{i}^{k}(\beta^k,\theta_i^l)\in \mathbb{R}^{m_i\times \sum_{{r}=1}^N m_{{r}}N_{{r}}}, l\in \{1,2,\cdots,N_i\}, i\in \mathcal{N}, k\in \{0,\cdots,K-1\}$, be the truncated row block, i.e., from row $\sum_{{r}=1}^{i-1} m_{{r}}N_{{r}}+m_i(l-1)$ to $\sum_{{r}=1}^{i-1} m_{{r}}N_{{r}}+m_i l$, of matrix $(-\mathbf{W}^{0,k}(\beta^k) )^{-1}$. 
{
Define shorthand notations 
$\Psi^{1,k}_i(\beta^k, \theta_i):= \mathbf{M}_{i}^k(\beta^k,\theta_i)\mathbf{W}^{1,k}(\beta^k)$ and
$\Psi^{2,k}_i(\beta^k, \theta_i):=\mathbf{M}_{i}^k(\beta^k,\theta_i)\mathbf{W}^{2,k}(\beta^k)$.}

\subsection{Extrinsic Belief Dynamics and Extended Riccati Equations}
{In this section, we focus on the extrinsic belief dynamics where $\Gamma_i^k$ is independent of players' actions $u^k$ for all $i\in \mathcal{N},k\in  \{0,\cdots,K-1\}$.} 
The proof of Theorem \ref{thm:ValueFunction} generalizes the one of classical LQ \textcolor{black}{games} (e.g., Chapter $5.5$ and $6.2$ in  \cite{basar1999dynamic}) where we further incorporate players' {asymmetric belief dynamics} into their objective functions to \textcolor{black}{minimize their expected costs} under deception. 
We apply \textit{dynamic programming} from stage $K-1$ backward to stage $0$ to obtain a closed-form solution of PBNE. 
\begin{theorem}
\label{thm:ValueFunction}
An $N$-player $K$-stage LQ game of incomplete information defined by \eqref{eq:linearDynamics}, \eqref{eq:LQcost}, and extrinsic belief dynamics $\beta^{k+1}_i=\Gamma_i^k(\beta^{k}_i,w^k,\theta_i), \forall i\in \mathcal{N}, \forall k\in\{0,\cdots,K-1\}$, admits a unique state-feedback  PBNE  
\begin{equation}
\label{eq:control}
{
u_i^{*,k}(\beta^k,x^k, \theta_i)=\Psi^{1,k}_i(\beta^k, \theta_i) x^k+\Psi^{2,k}_i(\beta^k, \theta_i),} 
\end{equation}
 if and only if  $R_i^k(\beta^{k}, \theta_i)$ is positive definite and $\mathbf{W}^{0,k}(\beta^k)$ is non-singular for all $\beta^{k}\in \Lambda, i\in \mathcal{N}, \theta_i\in \Theta_i,k\in \{0,\cdots, K-1\}$. 
The equilibrium cost $V_i^k$ is quadratic in $x^k$, i.e., 
\begin{equation}
\label{eq:QudraticValuefunction}
\begin{split}
V_i^k({\beta}^{k}, x^k,\theta_i)&=q_i^k({\beta}^{k},\theta_i)+(x^k)' N_i^k({\beta}^{k},\theta_i)\\
&
+(x^k)' S^k_i({\beta}^{k},\theta_i)x^k, \forall i\in \mathcal{N}, k\in\mathcal{K}. 
\end{split}
\end{equation}
\end{theorem}

\begin{proof}
We use backward induction to prove the result. 
At the final stage $K$, the value function $V_i^K(\beta^K,x^K,\theta_i)=(x^K-\hat{x}_{i}^K(\theta_i))' D_i^K(\theta_i) (x^{K}-\hat{x}_{i}^K(\theta_i) )+\hat{f}_{i}^K(\hat{x}_{i}^K(\theta_i))$ is quadratic in $x^K$ and we obtain the boundary conditions for  $S_i^K,N_i^K,q_i^K$ in \eqref{eq:boundary} 
by matching the RHS of \eqref{eq:QudraticValuefunction}.  
At  any stage $ k\in \{0,\cdots,K-1\}$, if \eqref{eq:QudraticValuefunction} is true at stage $k+1$, 
{we can expand $\mathbb{E}_{w^k\sim \Xi_w}[V_i^{k+1}(\beta^{k+1}, x^{k+1},\theta_i)]$ by plugging in the state dynamics $x^{k+1}=A^k(\theta)x^k+ \sum_{i=1}^N B^k_i(\theta_i) u_i^k+w^k$ and the belief dynamics $\beta^{k+1}_i=\Gamma_i^k(\beta^{k}_i,w^k,\theta_i)$.} 
Then, the Right-Hand Side (RHS)  of \eqref{eq:DP} 
is quadratic in $u_i^k$ for each player $i$. 
{If the coefficient matrix $R_i^k$ of the quadratic form $(u_i^k)' R_i^k u_i^k$} is positive definite, then the first-order necessary conditions for minimization are also sufficient and we obtain the following unique set of equations for the equilibrium action  $u^{*,k}$ by differentiating the RHS of \eqref{eq:DP} and setting it to zero, i.e., $\forall \theta_i\in \Theta_i$, 
\begin{equation}
\label{eq: PBNE}
\begin{split}
& -R^{k}_i u_i^{*,k}(\theta_i)= (B_i^k)' S^{k+1}_i \mathbb{E}_{\theta_{-i}\sim l_i^k} [ A^k ]   x^k
+\frac{1}{2}(B_i^k)' N_i^{k+1}
\\
&\quad \quad +(B_i^k)' S_i^{k+1}\sum_{j\neq i}\mathbb{E}_{\theta_{j}\sim l_i^k} [ B^k_j(\theta_j) u^{*,k}_j(\theta_j) ], \forall i\in \mathcal{N}. 
\end{split}
\end{equation}
Due to the coupling in players' actions and beliefs, we rewrite \eqref{eq: PBNE} in matrix form, i.e.,  $ -\mathbf{W}^{0,k}(\beta^k)  \mathbf{u}^{*,k}=\mathbf{W}^{1,k}(\beta^k)  x^k +\mathbf{W}^{2,k}(\beta^k)$, to solve the set of equations.  
Given the existence of $(-\mathbf{W}^{0,k}(\beta^k))^{-1}$, each player $i$'s equilibrium action is an affine function in $x^k$, {i.e., $u_i^{*,k}(\beta^k,x^k, \theta_i)=\Psi^{1,k}_i(\beta^k, \theta_i) x^k+\Psi^{2,k}_i(\beta^k, \theta_i)$. Note that the coefficients $\Psi^{1,k}_i,\Psi^{2,k}_i$ for player $i$ are functions of $\beta^k$, i.e., the beliefs of all players under all types at stage $k$.} 
{
Finally, after substituting the equilibrium action $ u_i^{*,k}(\beta^k,x^k, \theta_i)=\Psi^{1,k}_i(\beta^k, \theta_i) x^k+\Psi^{2,k}_i(\beta^k, \theta_i)$ into the RHS of \eqref{eq:DP} and representing $V_i^k$ in the Left-Hand Side (LHS) in its quadratic form of $x^k$,}  we can match the coefficients of quadratic, linear, and constant terms in the LHS and RHS  to obtain the extended Riccati equations \eqref{eq:S}, \eqref{eq:N}, and \eqref{eq:q}.  
\end{proof}

\begin{remark}[\textbf{Positive Definiteness}]
{
If $D_i^k(\theta_i)$ and $F_{ij}^k(\theta_i), \forall j\in \mathcal{N}$, are positive definite for all $k\in \mathcal{K}$, then $R_i^k(\beta^k,\theta_i)$ is positive definite for all $k\in \mathcal{K},\beta^k\in \Lambda$, \textcolor{black}{because} the linear combination of positive definite matrices in \eqref{eq:S} preserves positive definiteness. 
Note that the above condition is only a necessary condition; i.e., $D_i^k$ and $F_{ij}^k$ do not need to be positive definite to make $R_i^k$ positive definite as shown in Section \ref{sec:case study}.} 
\end{remark}

\begin{remark}[\textbf{\textcolor{black}{Cognitive Coupling}}]
\label{remark:cognition coupling}
{
Compared with the classical LQ \textcolor{black}{games} (e.g., Chapter $6$ in \cite{basar1999dynamic}), the deception of players' types results in a unique feature of \textcolor{black}{\textit{cognitive coupling}} represented by the belief matrix in \eqref{eq:belief matrix}; i.e., each player's action hinges on not only his own belief but also all other players' beliefs as these beliefs can affect their actions and further the outcome of the interaction.} 
Thus, player $i$ can change other players' actions \textcolor{black}{by manipulating} their beliefs of his type $\theta_i$, i.e.,  $l_{j}^k, \forall j\in \mathcal{N}\setminus \{i\},$  or \textcolor{black}{making} them believe that his belief $l_i^k$ on their types $\theta_{-i}$ \textcolor{black}{has} changed. 
\end{remark}
{We introduce matrix block partitions as follows.}  
For each type $\theta_i\in \Theta_i$, we divide $A^k(\theta),D_i^k(\theta_i),S_i^k(\theta_i)$ into $N$-by-$N$ blocks where the $(i,i)$ block is $A_i^k(\theta),\bar{D}_i^k(\theta_i),\bar{S}_i^k(\theta_i)\in \mathbb{R}^{ n_i\times n_i} $, respectively. The $i$-${th}$ row block of $N_i^k(\theta_i),\hat{x}_{i}^k(\theta_i)$ is $\bar{N}_i^k(\theta_i),\bar{x}_{i}^k(\theta_i)\in \mathbb{R}^{n_i\times 1}$, respectively. 
The  $i$-${th}$ row block of $B_i^k(\theta_i)$ is $\bar{B}_i^k(\theta_i)\in \mathbb{R}^{n_i\times m_i }$. 
{
When the system state $x^k$ can be represented by players' joint states $[x_i^k]_{i\in \mathcal{N}}$, Corollary \ref{corollary: degeneration} shows that the LQ game of asymmetric information degenerates to an LQ control problem if
players have decoupled cost and state dynamics defined as follows.} 
\begin{definition}[\textbf{Decoupled Dynamics and Cost}]
\label{def:decouple}
 Player $i\in \mathcal{N}$ has decoupled dynamics if for all $k\in \mathcal{K}$, $A_i^k(\theta)=\bar{A}_i^k(\theta_i), \forall \theta\in \Theta$, while all other elements in the $i$-${th}$ row block and the $i$-${th}$ column block of $A^k(\theta)$ are $0$. Besides, all elements of $B_i^k(\theta_i)$ except for the row block $\bar{B}_i^k(\theta_i)$ are required to be $0$.   
Player $i\in \mathcal{N}$ has a decoupled cost if for all stage $k\in \mathcal{K}$, $F^k_{ij}(\theta_i)=\mathbf{0}_{m_i,m_i}, \forall \theta_i\in \Theta_i, j\in \mathcal{N}\setminus \{i\}$, and all elements of $D_i^k(\theta_i)$  equal $0$ except for $\bar{D}_i^k(\theta_i)$. 
\end{definition}

\begin{corollary}[\textbf{Degeneration to LQ Control}]
\label{corollary: degeneration}
{
If $x^k=[x_i^k]_{i\in \mathcal{N}}$
for all stage $k\in \mathcal{K}$ and player $i$ has both decoupled cost and  state dynamics, then his action under PBNE is independent of other players' actions, types, and beliefs, i.e.,} 
$u_i^{*,k}=-(R_i^k)^{-1}(\bar{B}_i^k)'\bar{S}_i^{k+1}A_i^k x_i^k-\frac{1}{2} (R_i^k)^{-1}(\bar{B}_i^k)'\bar{N}_i^{k+1}$, where $R_i^k=F_{ii}^k+(\bar{B}_i^k)'\bar{S}_i^{k+1}\bar{B}_i^k$, $(G_i^k)'=\mathbf{I}_n-\bar{S}_i^{k+1}\bar{B}_i^k(R_i^k)^{-1}(\bar{B}_i^k)'$, $\bar{S}_i^k=(A_i^k)'(G_i^k)'\bar{S}_i^{k+1} A_i^k+\bar{D}_i^k$, and $\bar{N}_i^k=(A_i^k)'(G_i^k)'\bar{N}_i^{k+1}-2\bar{D}_i^k\bar{x}^k_{i}$. 
\end{corollary}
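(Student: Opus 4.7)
The plan is to show that the two decoupling hypotheses collapse player $i$'s $K$-stage problem to a single-agent linear-quadratic control problem on his local state $x_i^k$, so that beliefs over $\theta_{-i}$ and coupling through $u_{-i}^k$ both disappear from the Bellman recursion \eqref{eq:DP}. Once the reduction is in place, a classical backward-induction argument delivers the affine state-feedback law and the recursions for $R_i^k$, $\bar{S}_i^k$, and $\bar{N}_i^k$; independence from beliefs will make the conclusion hold for every level-$t$ BNE simultaneously, since nothing in the reduced problem ever references the belief sequence.

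I would carry out the reduction in three steps. First, extract the $i$-th row block of \eqref{eq:linearDynamics}: Definition \ref{def:decoupleDynamic} forces the $i$-th row of $A^k(\theta)$ to contain only the $(i,i)$ block $\bar{A}_i^k(\theta_i)$ and forces $B_i^k(\theta_i)$ to affect $x^{k+1}$ only through the row block $\bar{B}_i^k(\theta_i)$, so $x_i^{k+1}=\bar{A}_i^k(\theta_i)x_i^k+\bar{B}_i^k(\theta_i)u_i^k+\bar{w}_i^k$ depends only on $x_i^k,u_i^k,\theta_i$. Second, apply Definition \ref{def:decoupleCost}: $F_{ij}^k\equiv 0$ for $j\neq i$ and the vanishing of all blocks of $D_i^k$ outside $\bar{D}_i^k$ reduce the stage cost in \eqref{eq:LQcost} to $(x_i^k-\bar{x}_{d_i}^k)'\bar{D}_i^k(x_i^k-\bar{x}_{d_i}^k)+(u_i^k)'F_{ii}^k u_i^k+f_{d_i}^k$, which again involves only $x_i^k,u_i^k,\theta_i$. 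Consequently every $\mathbb{E}_{\theta_{-i}\sim l_i^k}[\cdot]$ in \eqref{eq:DP} degenerates to a trivial expectation and the value function depends on $(x_i^k,\theta_i)$ alone, establishing decoupled cognition in the sense of Definition \ref{def:decoupleCog}. Third, adopt the quadratic ansatz $V_i^k=q_i^k+(x_i^k)'\bar{N}_i^k+(x_i^k)'\bar{S}_i^k x_i^k$; the first-order condition in $u_i^k$ gives the claimed feedback, and substituting back together with the identity $(G_i^k)'\bar{S}_i^{k+1}G_i^k+\bar{S}_i^{k+1}\bar{B}_i^k(R_i^k)^{-1}F_{ii}^k(R_i^k)^{-1}(\bar{B}_i^k)'\bar{S}_i^{k+1}=(G_i^k)'\bar{S}_i^{k+1}$ used in the proof of Corollary \ref{corollary: AlternativeForm} yields the stated Riccati recursions for $\bar{S}_i^k$ and $\bar{N}_i^k$.

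The main obstacle is making the reduction of the dynamics watertight: a literal reading of Definition \ref{def:decoupleDynamic} only constrains $A^k$ and $B_i^k$, so to ensure that $u_j^k$ for $j\neq i$ does not feed back into $x_i^{k+1}$ I would interpret the definition as applying symmetrically, i.e., the $i$-th row block of every $B_j^k(\theta_j)$ also vanishes, which is the only reading consistent with the intended meaning that player $i$'s local subsystem is driven solely by his own action. Once that is settled, the remaining verification that the extended Riccati equations \eqref{eq:S}--\eqref{eq:q} of Theorem \ref{thm:ValueFunction} specialize to the classical ones is routine algebra: the off-diagonal blocks of $\mathbf{W}^{0,k}$ that couple player $i$'s feedback to $u_{-i}^k$ vanish, the matrix $\mathbf{M}_i^k$ collapses to $-(R_i^k)^{-1}$ acting on the $\theta_i$-indexed block of $\mathbf{W}^{1,k}$ and $\mathbf{W}^{2,k}$, and the belief-dependence of every Riccati coefficient disappears, recovering the standard LQ form stated in the corollary.
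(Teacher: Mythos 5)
Your proposal is correct and follows essentially the same route as the paper: the paper's proof is an induction showing that only the $(i,i)$ block of $S_i^k$ and the $i$-th row block of $N_i^k$ are nonzero, so that the off-diagonal blocks $W_{ij}^{0,k}$ vanish and $\mathbf{M}_i^k(\theta_i)=-(R_i^k(\theta_i))^{-1}$, which carries the same content as your reduction to a local-state LQ problem and your collapse of the extended Riccati equations. Your observation that Definition \ref{def:decoupleDynamic} must also be read as forcing the $i$-th row block of $B_j^k(\theta_j)$, $j\neq i$, to vanish is well taken --- the paper's proof implicitly relies on exactly this when it asserts $W_{ij}^{0,k}=(\mathbf{B}_i^k)'\mathbf{S}_i^{k+1}\mathbf{L}_{ij}^{k}\mathbf{B}_j^k=\mathbf{0}_{m_iN_i,m_jN_j}$.
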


\begin{proof}
We show by induction that $S_i^k,N_i^k, \forall k\in \mathcal{K},$ satisfy the \textit{sparsity condition} that only the $(i,i)$ block of $S_i^k$ and the $i$-${th}$ row block of $N_i^k$ are nonzero. 
At stage $K$, $S_i^K=D_i^K$ and $N_i^K=-2D_i^K\hat{x}_{i}^K$ satisfy the above  condition. 
At stage $k\in \{0,\cdots,K-1\}$, if $S_i^{k+1},N_i^{k+1}$ satisfy the sparsity condition, $\mathbf{W}^{0,k}(\beta^k)$ becomes a diagonal block matrix where $W^{0,k}_{ij}(\beta^k)=\mathbf{0}_{m_iN_i,m_jN_j}$ and $\mathbf{M}_{i}^k(\beta^k,\theta_i)=-(R_i^k(\beta^k,\theta_i))^{-1}$ for all $\beta^k\in \Lambda$. Then, $S_i^k,N_i^k$ satisfy the condition  based on \eqref{eq:S} and \eqref{eq:N}. 
\end{proof}

\subsection{Intrinsic Belief Dynamics and Receding-Horizon Control}
{
If there exists a player $i\in\mathcal{N}$ whose belief dynamics $\Gamma_i^k$ depend on intrinsic information at some stage $k\in \{0,\cdots,K-1\}$ as shown in \eqref{eq:belief dynamics}, then the equilibrium action $u_i^{*,k}$ is in general a nonlinear function of $x^k$ and the equilibrium cost $V_i^k$ is not quadratic in $x^k$ even under the LQ setting of \eqref{eq:linearDynamics} and \eqref{eq:LQcost}. 
Besides the \textcolor{black}{\textit{static cognitive coupling}} among $N$ players in Remark \ref{remark:cognition coupling}, the intrinsic information of $u^k$ in the belief update introduces another \textcolor{black}{\textit{dynamic cognitive coupling}} between the forward belief dynamics via \eqref{eq:belief dynamics} and the backward equilibrium computation via \eqref{eq:DP}, which makes it challenging to compute PBNE.}  
To reduce the computational complexity and further obtain \textcolor{black}{implementable} actions,  we adopt a receding-horizon approach \textcolor{black}{that} computes the sequentially rational action sequence of all the future stages $u^{*,k:K-1}$ at current stage $k\in \{0,\cdots,K-1\}$ assuming  $\beta^{\bar{k}}=\beta^{k}, \forall \bar{k}\in \{k,...,K-1\}$, yet only \textcolor{black}{implements} the current-stage action $u^{*,k}$. 
Then, at the new stage $k+1$, each player observes the new system state $x^{k+1}$ and updates the belief to $\beta^{k+1}$ and recomputes the entire action sequence $u^{*,k+1:K-1}$ under assumption of  $\beta^{\bar{k}}=\beta^{k+1}, \forall \bar{k}\in \{k+1,...,K-1\}$, yet still only \textcolor{black}{implements} the new current-stage action $u^{*,k+1}$. Players repeat the above procedure until they  reach the final stage of the interaction. 

Compared with PBNE, which produces an offline planning for all future stages under all possible scenarios before the game has taken place, 
the receding-horizon approach enables an online replanning of their actions repeatedly at the beginning of each new stage as the interaction continues. 
{
Although we assume that players' beliefs at the future stages are the same as the current beliefs during the phase of equilibrium computation, players can correct and update their beliefs and actions based on the online observation of $x^k$ during each replanning phase. 
Thus, the receding-horizon approach provides a reasonable approximation of the PBNE action and is more adaptive to unexpected environmental changes of the state dynamics $f^k$ and cost structure $g^k_i, \forall i\in\mathcal{N}$.}   

Under the LQ specification in \eqref{eq:linearDynamics} and \eqref{eq:LQcost} and Bayesian belief dynamics in \eqref{eq:Bayesian}, we summarize the computation phase and online implementation phase in Algorithm \ref{algorithm:computation} and \ref{algorithm:implentation}, respectively. 
To investigate the scalability of our algorithms, 
we analyze the temporal and spatial complexity concerning $N,K$, and $N_i$. To simplify the notation and enhance readability, we focus on the symmetric setting where $N_i=N_0\in \mathbb{Z}^+, \forall i\in \mathcal{N}$. 
For each player $i\in\mathcal{N}$ of type $\theta_i\in \Theta_i$ at the beginning of the interaction, i.e., $k=0$, he needs to store the game parameters $A^0,B_{{r}}^0({\theta}_{{r}}),D_{{r}}^0({\theta}_{{r}}),F_{{r}h}^0({\theta}_{{r}}), \forall  {\theta}_{{r}}\in \Theta_{{r}}$,
and the belief matrix $\mathbf{L}_{{r}h}^0$ for all $ {r},h\in \mathcal{N}$, which are \textit{common knowledge}. 
The spatial complexity to store the game parameters and the belief matrix is  $O(N^2 N_0)$ and  $O(N^2 N_0^2)$, respectively. 
Note that in general, player $i$ has coupled cognition as shown in Remark \ref{remark:cognition coupling} and has to keep track of not only his belief $\mathbf{L}_{i,j}^k,\forall j\in \mathcal{N}$, but also other players' beliefs $\mathbf{L}_{{r},h}^k,\forall {r}\in \mathcal{N}\setminus \{i\},h\in \mathcal{N}$,  to decide his equilibrium action under deception at each stage $k$. 
During the $K$-stage interaction, each player $i\in\mathcal{N}$ of type $\theta_i\in \Theta_i$ observes the system state $x^k$ and computes his equilibrium action $u_i^{*,k}(\beta^k,x^k,\theta_i)$ at stage $k$ based on Algorithm \ref{algorithm:computation}. 
After all players implement their equilibrium actions at stage $k$, the system state evolves to $x^{k+1}$. Based on the new state observation $x^{k+1}$, each player $i$ updates the belief matrix in \eqref{eq:belief matrix} via \eqref{eq:Bayesian}. 
Since player $i$ can delete the game parameters and the belief matrices of previous stages, the spatial complexity remains the same as the real-time stage index $k$ increases.  
Thus, our algorithm can handle the interaction of long duration. 
All players repeat the above procedure stated in lines $14$-$17$ of Algorithm \ref{algorithm:implentation}  until reaching the terminal stage $k=K$.

\begin{algorithm}[]
\SetAlgoLined
 \textbf{Load} game parameters $A^k,B_{{r}}^k(\bar{\theta}_{{r}}),D_{{r}}^k(\bar{\theta}_{{r}}),F_{{r}h}^k(\bar{\theta}_{{r}}), 
 \allowbreak
 \forall  \bar{\theta}_{{r}}\in \Theta_{{r}}$
 and the belief matrix $\mathbf{L}_{{r},h}^k$ 
  for all $ {r},h\in \mathcal{N}$\;
 \textbf{Input} state observation $x^k$\;
   \For{$\bar{k} \leftarrow K-1$ \KwTo $k$ }
   {
  \For{$j \leftarrow 1$ \KwTo $N$ }
   {
     \For{$\theta_j \leftarrow \theta_j^1$ \KwTo $\theta_j^{N_j}$  }
     {
   Compute $S_j^{\bar{k}},N_j^{\bar{k}}$ via \eqref{eq:S},  \eqref{eq:N} with $\beta^{\bar{k}}=\beta^{k}$\;
   }
   }   
  }
  \textbf{Return} his equilibrium action  $u_i^{*,k}(l_i^k,x^k,\theta_i)$ via \eqref{eq:control}\; 
 \caption{PBNE computation with $\beta^{\bar{k}}=\beta^{k}, \forall \bar{k}\in \{k,...,K-1\}$ at stage $k \in \{0,\cdots,K-1\}$ for player $i\in \mathcal{N}$ of type $\theta_i\in \Theta_i$  
 \label{algorithm:computation}}
\end{algorithm}

\begin{algorithm}[]
\SetAlgoLined
 \textbf{Initialize} $k=0$\;
 \textbf{Store} game parameters $A^k,B_{{r}}^k(\bar{\theta}_{{r}}),D_{{r}}^k(\bar{\theta}_{{r}}),F_{{r}h}^k(\bar{\theta}_{{r}}), 
 \allowbreak
 \forall  \bar{\theta}_{{r}}\in \Theta_{{r}}$
 and the belief matrix $\mathbf{L}_{{r},h}^k$ 
  for all $ {r},h\in \mathcal{N}$\;
  \While{$k<K$}
  { 
  Call Algorithm \ref{algorithm:computation} to  {implement}  $u_i^{*,k}(l_i^k,x^k,\theta_i)$\; 
  {Observe} state $x^{k+1}$ and {update} all elements of the belief matrix via \eqref{eq:Bayesian} to obtain $\mathbf{L}_{{r},h}^{k+1}, \forall {r},h\in \mathcal{N}$\; 
  \textbf{Delete} $A^k,B_{{r}}^k(\bar{\theta}_{{r}}),D_{{r}}^k(\bar{\theta}_{{r}}),F_{{r}h}^k(\bar{\theta}_{{r}}),\mathbf{L}_{{r},h}^k$ and 
  \textbf{Store} $A^{k+1},B_{{r}}^{k+1}(\bar{\theta}_{{r}}),D_{{r}}^{k+1}(\bar{\theta}_{{r}}),F_{{r}h}^{k+1}(\bar{\theta}_{{r}}),\mathbf{L}_{{r},h}^{k+1}$ for all $  \bar{\theta}_{{r}}\in \Theta_{{r}}$ and for all ${r},h\in \mathcal{N}$\; 
{Update} stage index $k\leftarrow k+1$\;  
  }
 \caption{$K$-stage receding-horizon control for player $i\in \mathcal{N}$ of type $\theta_i\in \Theta_i$ 
 \label{algorithm:implentation}}
\end{algorithm}

The computational complexity of the belief matrix update in the line $15$ of Algorithm \ref{algorithm:implentation} is $O(N_0^N N)$. 
For any $\beta^k$, the term $\mathbf{W}^{0,k}(\beta^k)$ has computational complexity $O(N_0^N N)+O(N_0^3 N^2)$, which is determined by the belief matrix update and the matrix chain multiplication of $W_{ij}^{0,k}(\beta^k)$, respectively. 
Then, the computational complexity of $(\mathbf{W}^{0,k}(\beta^k))^{-1}$  and $\mathbf{W}^{1,k}(\beta^k)$ is $O(N_0^N N)+O( N_0^3 N^3)$ and $O(N_0^N N)+O(N_0^3N^2 )$, respectively. 
Given $\beta^k$ and $\theta_i$, the computational complexity of $S_i^k(\beta^k,\theta_i)$ in \eqref{eq:S} is $O(N_0^N N)+O(N_0^3N^3)+O(N_0^3N^2)+O(N_0N)=O(\max{ (N_0^N N, N_0^3N^3) })$, which hinges on the computational complexity of $\mathbf{M}_i^k(\beta^k,\theta_i)$ (or $(\mathbf{W}^{0,k}(\beta^k))^{-1}$), $\mathbf{W}^{1,k}(\beta^k)$, and the matrix chain multiplication in \eqref{eq:S}. 
Similarly,  $N_i^k(\beta^k,\theta_i)$ and $W^{2,k}(\beta^k)$ both have computational complexity of $O(N_0^N N)+O(N_0N)$. 
Therefore, player $i$'s temporal complexity at each stage $k\in \{0,1,\cdots,K-1\}$ is 
\begin{equation*}
    O((K-k)\cdot N_0N \cdot \max{ ( N_0^N N, N_0^3N^3) }).  
\end{equation*} 
The temporal complexity has the maximum value of $O(K\cdot \max{\{N_0^{N+1} N^2, N_0^4N^4\}})$ at the initial stage $k=0$ where each player has to predict the entire $K$ future stages to act optimally under the deception. 
Since the temporal complexity decreases as the real-time stage index $k$ increases, a player who can compute the equilibrium action within the required time at the initial stage $k=0$ is guaranteed to meet the real-time requirement in the following stages of interaction. 
If the number of types and agents are on the same scale, e.g., $N_0=N$, then $\lim_{N\rightarrow \infty} (N_0^{N+1} N^2)/(N_0^4N^4) \rightarrow \infty$ and the computation of belief matrix update plays a dominant role as each player keeps track of all players' beliefs to obtain the equilibrium action under deception. 
If $N_0 \ll N$, e.g., $N_0=N^{1/N}$, then $\lim_{N\rightarrow \infty} (N_0^{N+1} N^2)/(N_0^4N^4) \rightarrow 0$ and the inverse of $\mathbf{W}^{0,k}(\beta^k)$ becomes the most time-consuming operation due to the coupling in dynamics, costs, and cognition. 
Effective deception can prevent or delay other players from learning the deceiver's private type. We define the criterion of successful learning of the deceiver's type in Definition \ref{def:truth-revealing} and  $\epsilon$-deceviability and $\epsilon$-learnability in Definition \ref{def:Deceviability}. 

 \begin{definition}[\textbf{Stage of  Truth Revelation}]
 \label{def:truth-revealing}
Consider two players $i,j\in \mathcal{N}$ with type $\theta_i$ and $\theta_j$, respectively. 
Stage $k_{i,j}^{tr}\in \mathcal{K} \cup \{K+1\}$ is said to be player $i$'s truth-revealing stage with accuracy $\delta\in (0,1]$\footnote{Since the belief mismatch does not reduce to $0$ in finite stages with initial belief $l_i^0\in (0,1)$, the accuracy threshold $\delta\neq 0$.}  if it satisfies the following two conditions.
\begin{itemize}
    \item \textbf{The bounded mismatch condition}: 
    player $i$'s 
\textit{belief mismatch} remains less than $\delta$ after stage $k_{i,j}^{tr}\in \mathcal{K}$, i.e., 
\begin{equation}
\label{eq:tr1}
1-l^k_i(\theta_j |h^k,\theta_i) \leq  \delta, \forall k\geq k_{i,j}^{tr}.
\end{equation}
    \item \textbf{The first-hitting-time condition}: 
    $k_{i,j}^{tr}\in \mathcal{K}$ is the first stage satisfying \eqref{eq:tr1}, i.e., 
$
1-l^{k_{i,j}^{tr}-1}_i(\theta_j |h^{k_{i,j}^{tr}-1},\theta_i) > \delta, k_{i,j}^{tr}>1.  
$
\end{itemize}
If there does not exist  $k_{i,j}^{tr}\in \mathcal{K}$ that satisfies \eqref{eq:tr1}, we define $k_{i,j}^{tr}:=K+1$. If there are only two players $N=2$, we write $k_{i,j}^{tr}$ as $k_{i}^{tr}$ without ambiguity. 
\end{definition}

Due to deceivers' deceptive actions and the external noises, the belief sequence may be fluctuant; i.e., there can exist   $k< k_{i,j}^{tr}$ such that $1-l^k_i(\theta_j |h^k,\theta_i) \leq  \delta$. 
Thus, as shown in Definition \ref{def:truth-revealing}, a player should only claim a successful learning of other players' types if his belief mismatch remains less than $\delta$ for the remaining stages. 

\begin{definition}[\textbf{Deceviability and Learnability}]
\label{def:Deceviability}
Consider players $i,j\in \mathcal{N}$ with type $\theta_i$ and $\theta_j$, thresholds $\delta\in (0,1], \epsilon\in [0,1]$, and a given stage index $\tilde{k}\in \mathcal{K}\cup \{K+1\}$. 
Player $i$ is $\tilde{k}$-stage $\epsilon$-deceivable 
if the probability
$\Pr(k_{i,j}^{tr}<\tilde{k})$, \textcolor{black}{or equivalently $\Pr(l_i^{\tilde{k}}(\theta_j |x^{\tilde{k}},\theta_i)>1-\delta)$, is not greater than  $\epsilon$ for all $ l_i^0\in (0,1)$. }
If the above does not hold, player $j$'s type is said to be $\tilde{k}$-stage $\epsilon$-learnable by player $i$. 
\end{definition}

Since robot deception involves only a finite number of stages, it is essential that the deceived robot can learn the deceiver's type as quickly as possible so that he has sufficient stages to plan on and mitigate the deception impact from the previous stages. 
Therefore, the definition of  learnability, i.e., non-deceviability in Definition \ref{def:Deceviability}, not only requires the deceived player to be capable of learning the deceiver's private information, but also learning it in a desirable rate, i.e., within $\tilde{k}$ stage. 
Due to the external noise, $k_{i,j}^{tr}$ is a random variable. 
Thus, the definition of learnability requires $\Pr(k_{i,j}^{tr}<\tilde{k})> \epsilon$; i.e., player $i$ has a large probability to correctly learn the type of player $j$ before stage $\tilde{k}$. 

\section{Dynamic Target Protection under Deception} 
\label{sec:case study} 
We investigate a pursuit-evasion scenario that contains 
{
two UAVs with the decoupled linear time-invariant state dynamics, i.e., $A^k(\theta)=\mathbf{I}_{4}, \bar{B}_i^k(\theta_i)=[\tilde{B}_i(\theta_i), 0; 0, \tilde{B}_i(\theta_i)]\in \mathbb{R}^{2\times 2}, \forall k\in\mathcal{K}$.}  
We use `she' for UAV $1$, the pursuer, and `he' for UAV $2$, the evader.  
UAV $i$'s state $x_i^k:=[x^k_{i,x},x^k_{i,y}]'\in\mathbb{R}^{2\times 1}$ represents $i$'s location $(x^k_{i,x},x^k_{i,y})$ in the $2D$ space, and action $u_i^k=[u_{i,x}^k,u_{i,y}^k]\in \mathbb{R}^{2\times 1}$ affects $i$'s speed in $x$ and $y$ directions. 

UAV $2$ as the evader selects either the harbor in `Normandy' or `Calais' as his final location 
based on his type $\theta_2\in \{\theta_2^g,\theta_2^b\}$. 
He aims to reach `Normandy' located at $\gamma(\theta_2^g):=(x^g,y^g)$ in $K=40$ stages if his type is $\theta_2^g$, otherwise `Calais' located at $\gamma(\theta_2^b):=(x^b,y^b)$ if his type is $\theta_2^b$. 
UAV $1$ as the pursuer can make  interfering signals and aims to be close to UAV $2$ at the final stage to protect the  harbor targeted by the evader, i.e.,  
$g_1^k(x^k,u^k,\theta_1)= d_{12}^k(\theta_1)( (x^k_{2,y}-x^k_{1,y})^2+(x^k_{2,x}-x^k_{1,x})^2) 
 + f_{11}^k(\theta_1)((u_{1,x}^k)^2+(u_{1,y}^k)^2)-f_{12}^k(\theta_1)((u_{2,x}^k)^2+(u_{2,y}^k)^2), \forall k\in \mathcal{K}$,  
 where $ d_{12}^k(\theta_1)\in \mathbb{R}_{\geq 0}$ penalizes her distance from the evader at stage $k\in \mathcal{K}$, $f_{11}^k(\theta_1)\in \mathbb{R}_{\geq 0}$ prevents her from a high action cost, and $f_{12}^k(\theta_1)\in \mathbb{R}_{\geq 0}$ incites her opponent, i.e., the evader, to take costly actions. 
We classify UAV $1$ into two types, i.e., $\Theta_1= \{\theta_1^H,\theta_1^L\}$, based on her maneuverability represented by the value of $\tilde{B}_1(\theta_1)$. 
Given higher maneuverability $\tilde{B}_1(\theta_1^H)>\tilde{B}_1(\theta_1^L)$, the pursuer of type $\theta_1^H$ can obtain a higher speed under the same action $u_1^k$ and thus cover a longer distance. 

{The evader's goals of deceptive target reaching and pursuit evasion are incorporated into} the cost structure $ g_2^k(x^k,u^k,\theta_2)  = d_{2,b}^k(\theta_2) (   (x^k_{2,y}-y^b)^2+ (x^k_{2,x}-x^b)^2)+
      d_{2,g}^k(\theta_2) (   (x^k_{2,y}-y^g)^2+   (x^k_{2,x}-x^g)^2)- d_{21}^k(\theta_2) (  (x^k_{1,y}-x^k_{2,y})^2+ (x^k_{1,x}-x^k_{2,x})^2)
      +f_{22}^k(\theta_2)((u_{2,x}^k)^2+(u_{2,y}^k)^2)
 - f_{21}^k(\theta_2)((u_{1,x}^k)^2+(u_{1,y}^k)^2), \forall k\in \mathcal{K}$. 
{
Similar to the pursuer's cost parameters, $d_{21}^k(\theta_2)\in \mathbb{R}_{\geq 0}$ represents the evader's level of \textit{evasion determination} to keep a distance from the pursuer along the trajectory. 
The action costs of the evader and the pursuer are regulated by $f_{22}^k(\theta_2)\in \mathbb{R}_{\geq 0}$ and $f_{21}^k(\theta_2)\in \mathbb{R}_{\geq 0}$, respectively.  
The parameters $d_{2,b}^k(\theta_2)$ and $d_{2,g}^k(\theta_2)$ represent the evader's  attempt to head toward `Normandy' and `Calais', respectively, at stage $k\in \mathcal{K}$ under type $\theta_2\in \Theta_2$. 
We use the ratio $d_{2,g}^k(\theta_2)/d_{2,b}^k(\theta_2)$ to represent the evader's level of \textit{trajectory deception}.} 
Since the pursuer can learn the evader's type based on the real-time  observations of state $x_2^k$, the evader attempts to make his target $\epsilon_0$-ambiguous at all previous stages, 
 i.e., $|d_{2,b}^k(\theta_2)/d_{2,g}^k(\theta_2)-1|\leq \epsilon_0, \forall \theta_2, \forall k\neq K$, and reveal his true target only at the final stage, i.e., $d_{2,g}^K(\theta_2^b)=0$ and $d_{2,b}^K(\theta_2^g)=0$.  
The evader chooses a small $\epsilon_0\geq 0$ and achieves the maximum ambiguity when $\epsilon_0=0$. 
Two blue lines in Fig. \ref{fig:NAMEInprogressDecep} illustrate how the evader manages to remain ambiguous in a cost-effective manner from two different initial locations.   
Instead of keeping an equal distance to both potential targets, the evader heads toward the midpoint $((x^g+x^b)/2,(y^g+y^b)/2)$ at the early stages to confuse the pursuer. 
{However, the evader starts to head toward the true target at around half of $K$ stages rather than the last few stages so that he can reach the target with a moderate control cost $(u^{k}_2)' F^k_{22} (\theta_2)u^{k}_2$. Fig. \ref{fig:NAMEInprogressDecep} also shows that for a given initial location, the evader who adopts a higher level of \textit{trajectory deception} heads more toward the misleading target at the early stages.}  


In this case study, we suppose that the evader's true target is Calais and let $\theta_2^b$ be his \textit{true type} and $\theta_2^g$ be the \textit{misleading type}. 
{
The following two ratios capture the evader's tradeoff of being deceptive, effective, and evasive.}  
On one hand, the ratio $d_{2,b}^k(\theta_2^b)/d_{2,b}^K(\theta_2^b), k\neq K$, reflects the evader's tradeoff between applying deception along the trajectory and staying close to the true target at the final stage. 
{
Fig. \ref{fig:NAMEtargetreaching} shows that as the evader focuses more on a deceptive trajectory represented by a larger value of $d_{2,b}^k(\theta_2^b)/d_{2,b}^K(\theta_2^b), k\neq K$,} his trajectory remains ambiguous for longer stages while his final location is farther away from the true target. 
On the other hand, the ratio $d_{21}^k(\theta_2^b)/d_{2,b}^K(\theta_2^b), k\neq K$, reflects the evader's tradeoff between evasion and target-reaching. 
As the evader focuses more on keeping a distance from the pursuer along the trajectory, he takes a bigger detour and stays farther away from his true target at the final stage as shown in Fig. \ref{fig:NMAEevasion}. 
\begin{figure*}[t]
    \centering 
\begin{subfigure}{0.33\textwidth}
  \includegraphics[width=\linewidth]{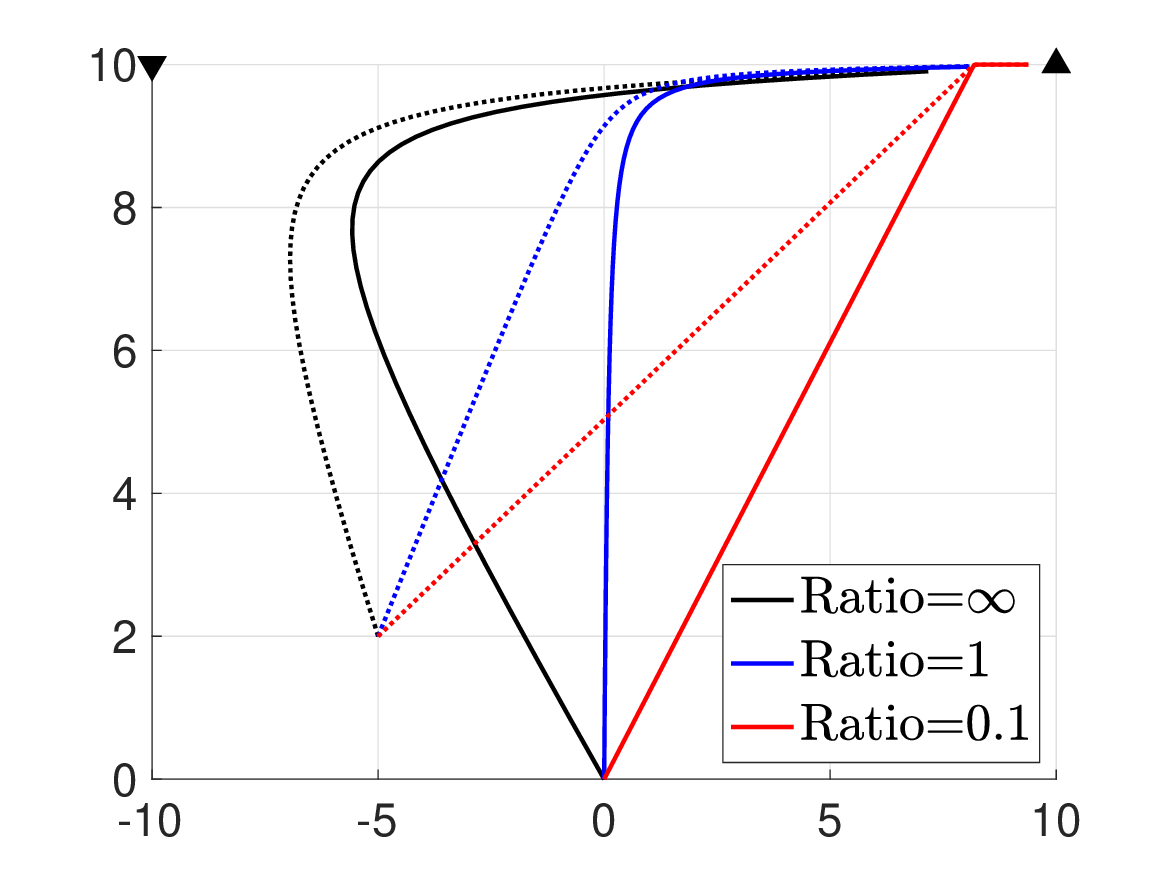}
  \caption{\label{fig:NAMEInprogressDecep} 
Ratio represents $d_{2,g}^k(\theta_2^b)/d_{2,b}^k(\theta_2^b)$.  
   }
\end{subfigure}\hfil 
\begin{subfigure}{0.33\textwidth}
  \includegraphics[width=\linewidth]{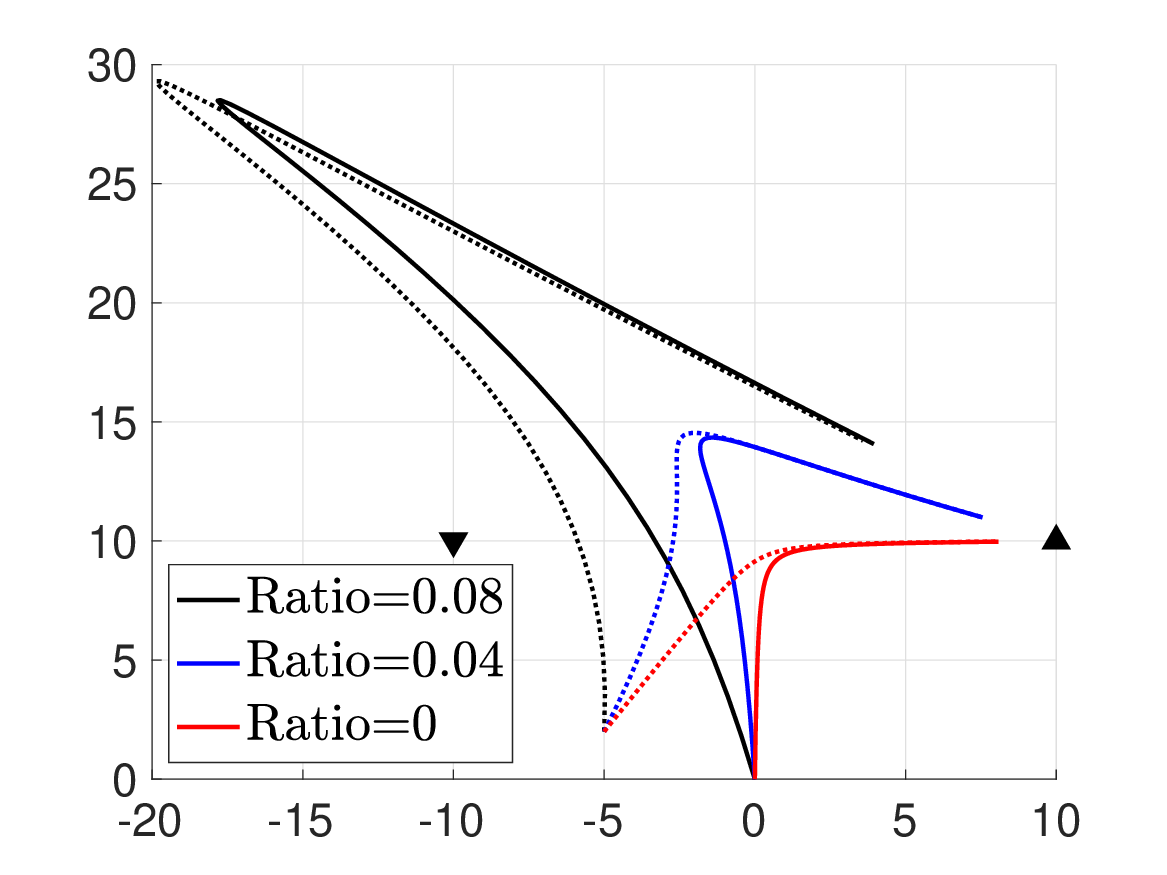}
  \caption{\label{fig:NAMEtargetreaching} 
  Ratio represents $d_{2,b}^k(\theta_2^b)/d_{2,b}^K(\theta_2^b)$. 
 }
\end{subfigure}\hfil 
\begin{subfigure}{0.33\textwidth}
 \includegraphics[width=\linewidth]{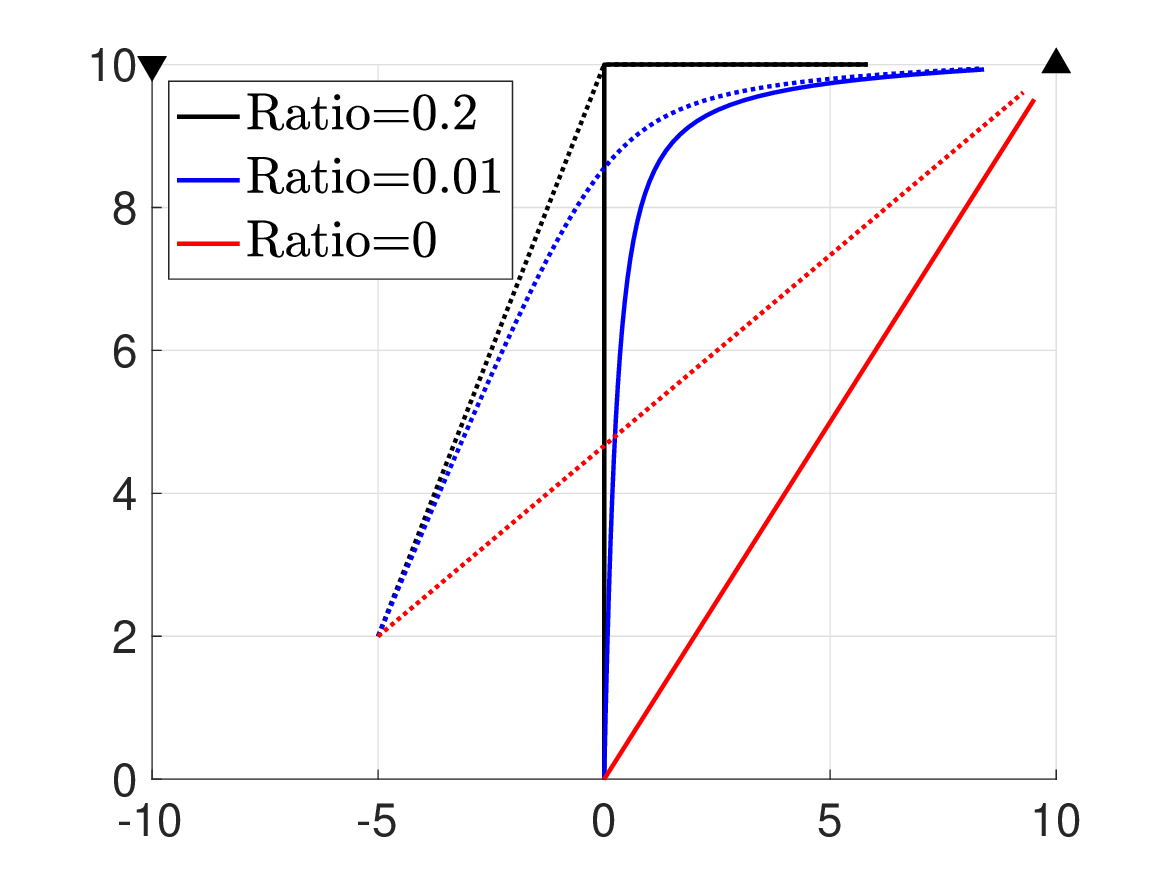}
   \caption{\label{fig:NMAEevasion} 
 Ratio represents $d_{21}^k(\theta_2^b)/d_{2,b}^K(\theta_2^b)$. 
 }
\end{subfigure}
\caption{
The evader's trajectories from $x^0_2=[0,0]$ and $x^0_2=[-5,2]$ in solid and the dashed lines, respectively. 
The black downward and upward triangles represent the location of Calais $(x^b,y^b)=(-10,10)$ and Normandy $(x^g,y^g)=(10,10)$, respectively.
\textcolor{black}{The ratios capture the evader's tradeoff of forming a deceptive trajectory,  reaching the true target, and evading the pursuit.} 
}
\label{fig:name}
\end{figure*}

Finally, we {
transform UAV $i$'s coupled cost $g_i^k$ into the matrix form given in Section \ref{sec:LQ}}, i.e., 
 $\hat{x}_1^k(\theta_1)=\mathbf{0}_{4,1}, \hat{f}_1^k(\hat{x}_1^k(\theta_1))=0, F^k_{ii}(\theta_1)=f^k_{ii}(\theta_1)\cdot \mathbf{I}_2, F^k_{ij}(\theta_1)=-f^k_{ij}(\theta_1) \cdot \mathbf{I}_2, j\neq i, 
  D_1^k(\theta_1)=d_{12}^k(\theta_1)\cdot  [1,0,-1,0;0,1,0,-1;-1,0,1,0;0,-1,0,1]$, 
\begin{equation*}
\resizebox{.99\hsize}{!}{
$D_2^k(\theta_2)=\begin{bmatrix}
-d_{21}^k & 0 &d_{21}^k  & 0 \\
0 & -d_{21}^k &0  & d_{21}^k  \\
d_{21}^k & 0 & d_{2,b}^k+d_{2,g}^k-d_{21}^k & 0  \\
0 & d_{21}^k &0 & d_{2,b}^k+d_{2,g}^k-d_{21}^k  \\
\end{bmatrix}$},
\end{equation*}
$\hat{x}_2^k(\theta_2)=\frac{1}{d_{2,b}^k+d_{2,g}^k}\cdot [d_{2,b}^k x^b+d_{2,g}^k x^g \: ; \:  d_{2,b}^k y^b+d_{2,g}^k y^g\: ; \: d_{2,b}^k x^b+d_{2,g}^k x^g \: ;\:  d_{2,b}^k y^b+d_{2,g}^k y^g]$,  $\hat{f}_2^k(\hat{x}_2^k(\theta_2))=\frac{ d_{2,b}^k  d_{2,g}^k  ((x^b-x^g)^2+ (y^b - y^g)^2)}{ d_{2,b}^k + d_{2,g}^k }$.

\subsection{Deceptive Evader with Decoupled Cost Structure}
We first investigate the scenario where the evader has a decoupled cost structure\footnote{
This paper has supplementary downloadable materials available at \url{http://ieeexplore.ieee.org}, provided by the authors. 
This includes a video demo of two UAVs' trajectories and belief updates under the decoupled structure.  
} defined in Definition \ref{def:decouple}, i.e., $d^k_{21}(\theta_2)=0, \allowbreak
\forall \theta_2\in \Theta_2,  \allowbreak
\forall k\in \mathcal{K}$. 
According to Corollary \ref{corollary: degeneration}, the evader's trajectory is then independent of the pursuer's action, type, and belief. 
Fig. \ref{fig:optimalcontrol} visualizes the pursuer's trajectories.  Although the pursuer only aims to be close to the evader at the final stage, she also takes proactive actions in the previous stages to be cost-efficient. 
If the pursuer knows the evader's type, then she can head toward the true target directly and will not be misled by the evader's trajectory ambiguity at the early stages \textcolor{black}{as illustrated by the black dashed line in Fig. \ref{fig:optimalcontrol}}. 
If the evader's type is private, then a larger initial belief mismatch $1-l_1^0(\theta_2^b|x^0,\theta_1^H)$ makes the pursuer head more toward the misleading target at the early stages \textcolor{black}{as illustrated by the three solid lines in Fig. \ref{fig:optimalcontrol}}. 
However, due to the pursuer's online learning, \textcolor{black}{which is compatible, efficient, and robust as shown in Section \ref{subsec:finitestageBelief}},  
she manages to approach the evader at the final stage regardless of her initial belief mismatch. 
Fig. \ref{fig:InitialBelief_optimalcontroltotal} shows the pursuer's $K$-stage belief variation. 
The evader's ambiguous trajectory results in belief fluctuations at the early stages, yet the pursuer can quickly reduce the belief mismatch when the evader starts to head toward the true target. 
After the pursuer has corrected her initial belief mismatch at around stage $k=16$, she can head toward the true target in the cost-efficient way; i.e, she attempts to keep a uniform linear motion under the external noise as shown in 
the upper right region of Fig. \ref{fig:optimalcontrol}. 

\begin{figure}
\centering
\includegraphics[width=1 \columnwidth]{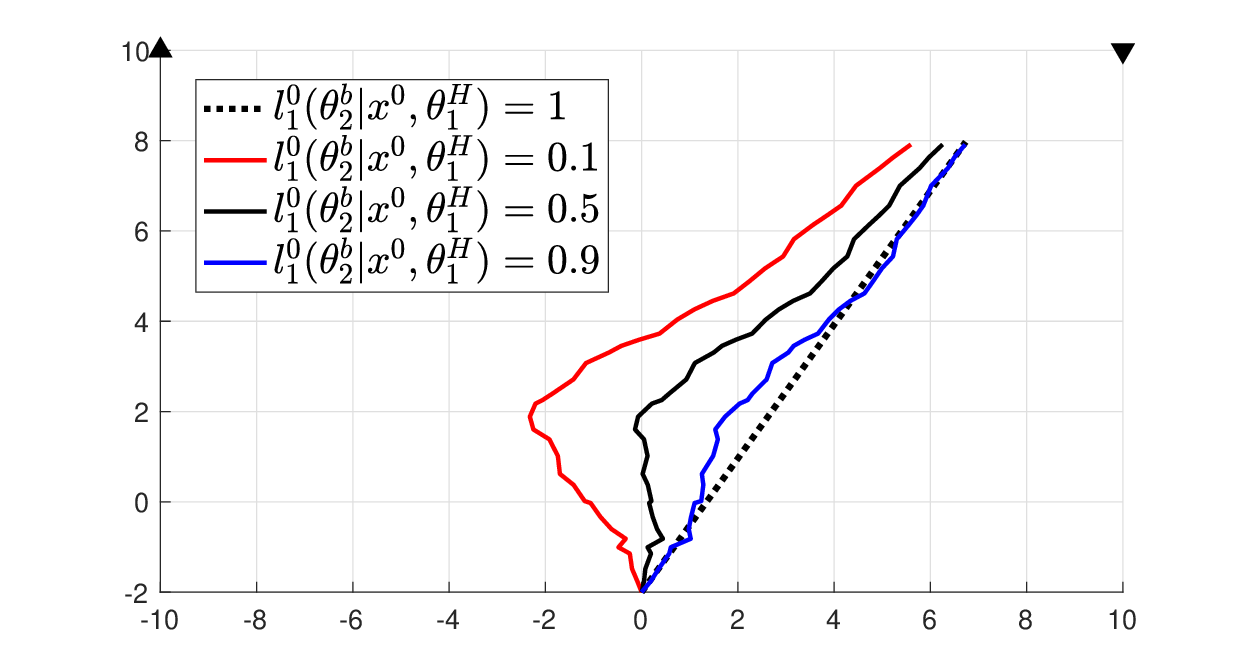}
\caption{The pursuer's trajectories under different initial beliefs. 
}
\label{fig:optimalcontrol}
\end{figure}

\begin{figure}
\centering
\includegraphics[width=1 \columnwidth]{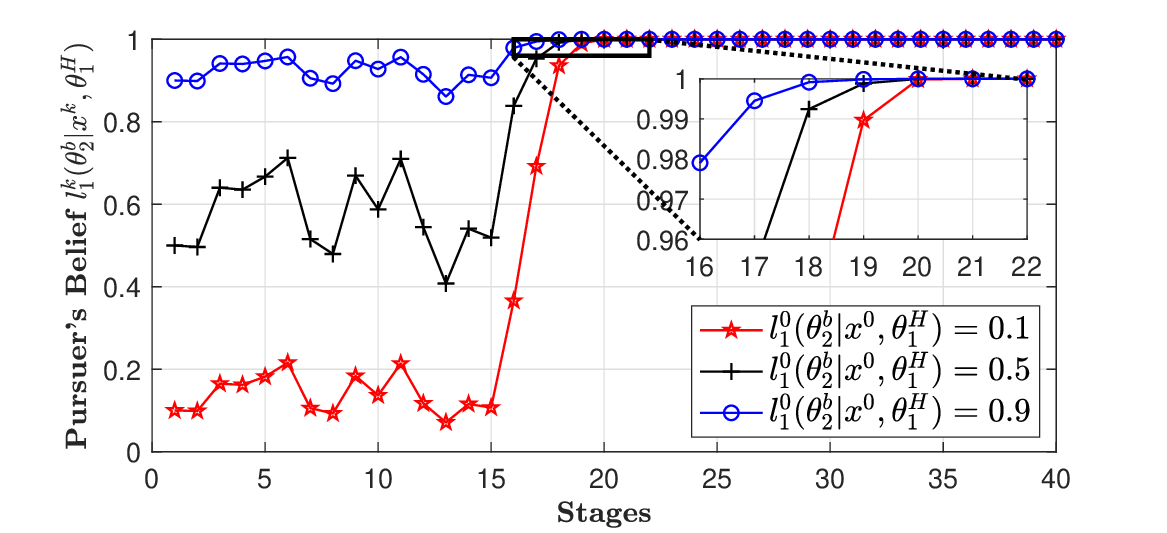}
\caption{ 
The pursuer's 
belief update over 
$K$ stages under three different initial beliefs and 
the same noise sequence $[w^k]_{k\in \mathcal{K}}$. The inset black box magnifies the selected area. 
}
\label{fig:InitialBelief_optimalcontroltotal}
\end{figure}

\subsubsection{Finite-Horizon Analysis of Bayesian Update}
\label{subsec:finitestageBelief}
\textcolor{black}{
In this subsection, we illustrate the compatibility, efficiency, and robustness of the finite-horizon Bayesian update in \eqref{eq:Bayesian} to reduce the initial belief mismatch. 
The pursuer is of high-maneuverability and the evader's true type is 
$\theta_2^b$. 
Define 
 the likelihood function of $\theta_2^b$ and $\theta_2^g$ as $a^k:=\Pr(x^{k+1}| \theta_2^b,x^k,\theta_1^H)$ and  $c^k:=\Pr(x^{k+1}| \theta_2^g,x^k,\theta_1^H)$, respectively. 
As $w^k\in \mathbb{R}^{n\times 1}$,  $a^k$ and $c^k$ are positive. 
 With an initial belief $l_1^0\in (0,1)$ and a finite likelihood ratio $e^k:=c^k/a^k \in (0,\infty)$, we can represent \eqref{eq:Bayesian} in the following form with three properties: 
\begin{equation*}
l_1^{k+1}=\frac{l_1^k \cdot a^k}{l_1^k \cdot  a^k+(1-l_1^k) \cdot c^k}
=\frac{1}{1+(\frac{1}{l_1^0} -1)\prod_{\bar{k}=0}^{k} e^{\bar{k}}}\in (0,1).  
\end{equation*} 
\begin{enumerate}
\item (\textbf{Compatibility}): For all $l_1^k\in (0,1)$, the belief update at stage $k$ is compatible to the evidence represented by the ratio $e^k$. 
In particular, if $e^k<1$, then $l_1^{k+1}>l_1^k$; if $e^k>1$, then $l_1^{k+1}<l_1^k$; if $e^k=1$, then $l_1^{k+1}=l_1^k$. 
\item (\textbf{Efficiency}): If the evidence of state observation $x^{k+1}$ indicates that the type is more likely to be the true type $\theta_2^b$, i.e., $e^k<1$, then the function $l_1^{k+1}/l_1^k=1/(l_1^k+(1-l_1^k)e^k)$  at stage $k$ is monotonically decreasing over $l_1^k$. 
If the evidence indicates that the type is more likely to be the misleading type $\theta_2^g$, i.e., $e^k>1$, then the function  $l_1^{k+1}/l_1^k$ is  monotonically increasing  over $l_1^k$. 
\item (\textbf{Robustness}): The order of the evidence sequence $e^{\bar{k}}, \bar{k}=0,\cdots,k$, has no impact on the belief $l_1^{k+1}$. 
\end{enumerate}
 Property one shows that although the external noise can result in the 
 fluctuations of the belief update, the belief mismatch, i.e., $1-l_1^k$, will decrease when $e^k<1$, regardless of the prior belief $l_1^k\in (0,1)$. 
 Property two shows the efficiency of the belief update. 
 The belief changes more under a larger belief mismatch, which results in a quick correction.  
Property three shows the robustness of the belief update. 
The erroneous belief update caused by a heavy noise can be corrected in the later stages when the noise fades.}

\subsubsection{Comparison with Heuristic Policies}
We compare the proposed pursuer's \textcolor{black}{control policy} with two heuristic ones to demonstrate its efficacy in \textcolor{black}{counter-deception}\footnote{
The supplementary materials include a video demo that compares the proposed  policy's trajectory and performance  with two heuristic policies.
}. 
The first heuristic policy is to repeat the attacker's trajectory with a one-stage delay; i.e.,  the pursuer applies the action so that $x_1^{k+1}=x_2^k, \forall k\in \mathcal{K}\setminus \{K\}$. 
The pursuer does not need to apply Bayesian learning and we name this policy as \textit{direct following}. 
The second heuristic policy for the pursuer is to stay at the initial location until her truth-revealing stage $k_1^{tr}$ and then head toward the evader's expected final-stage location in the remaining stages. 
The second policy is \textit{conservative} because the pursuer does not take proactive actions until she identifies the evader's type. 

Let player $i$'s \textit{ex-post cumulative cost} $\hat{V}_i^{0:k}:=\sum_{h=0}^k g_i^{h}, \forall k\in \mathcal{K}$, be a real-time evaluation of the online algorithm. 
Although a pursuer under both heuristic policies manages to stay close to the evader at the final stage, Fig. \ref{fig:bechmark} shows that 
both heuristic policies are more costly than the proposed equilibrium strategy in the long run. 
\begin{figure*}[htb]
    \centering 
\begin{subfigure}{0.3\textwidth}
  \includegraphics[width=\linewidth]{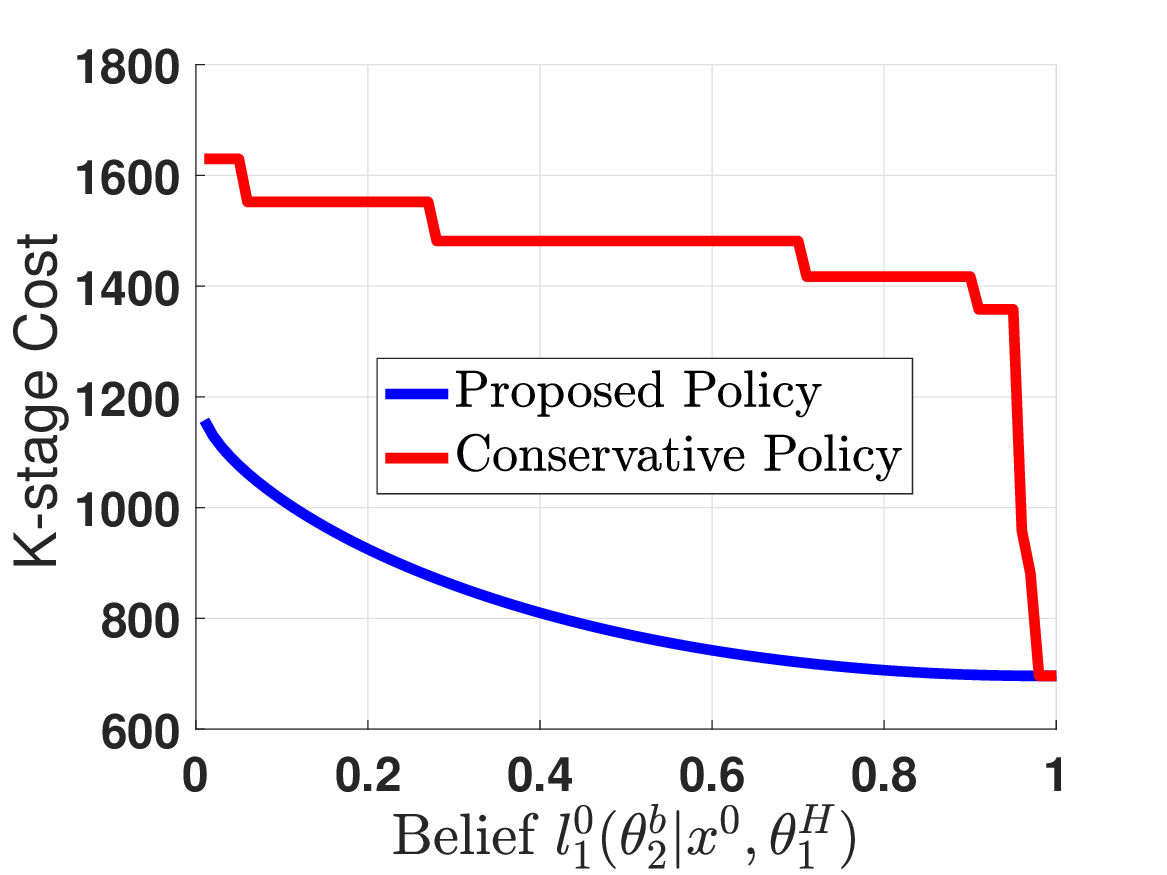}
  \caption{\label{fig:BenchConservative} 
  The $K$-stage cumulative cost $\hat{V}_i^{0:K}$ versus different initial beliefs. }
\end{subfigure}\hfil 
\begin{subfigure}{0.3\textwidth}
  \includegraphics[width=\linewidth]{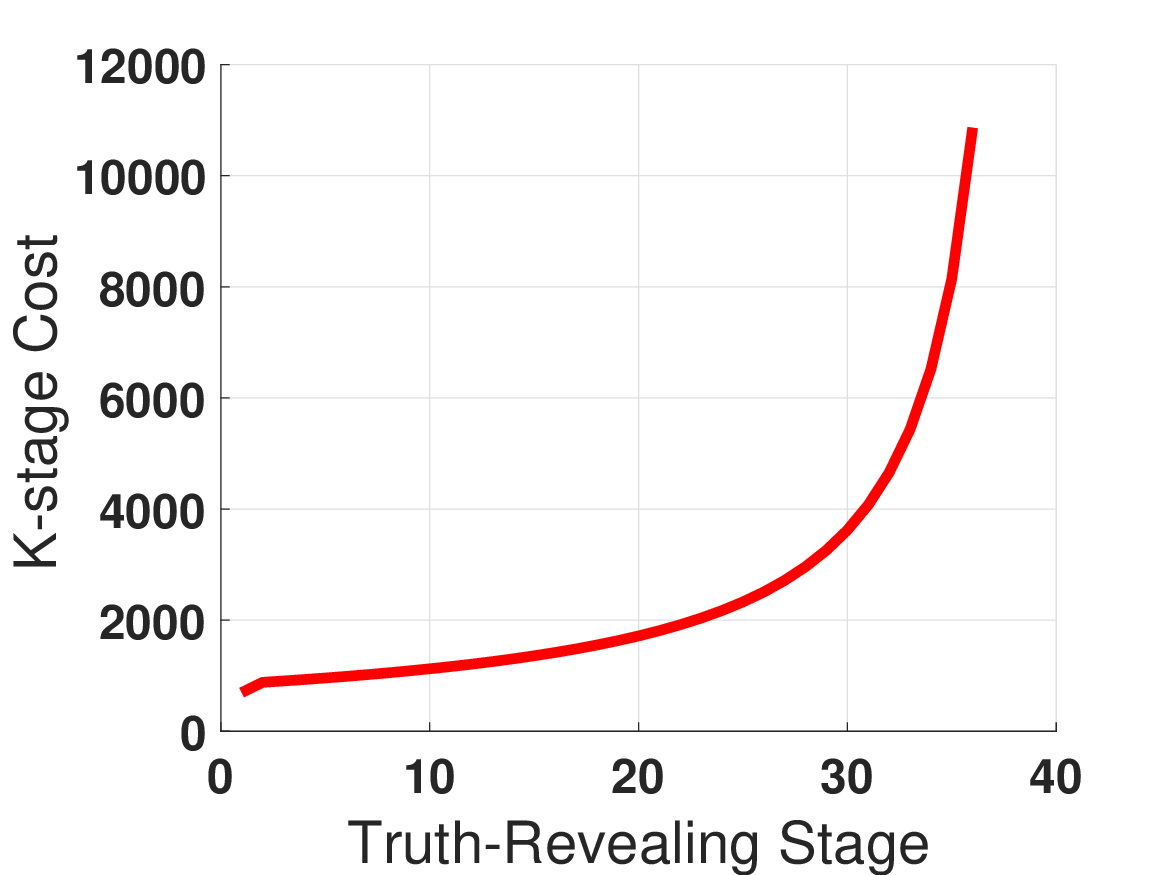}
  \caption{\label{fig:BenchExp} 
  The $K$-stage cumulative cost $\hat{V}_i^{0:K}$ versus  $k_1^{tr}$ under the \textit{conservative policy}. 
 }
\end{subfigure}\hfil 
\begin{subfigure}{0.3\textwidth}
 \includegraphics[width=\linewidth]{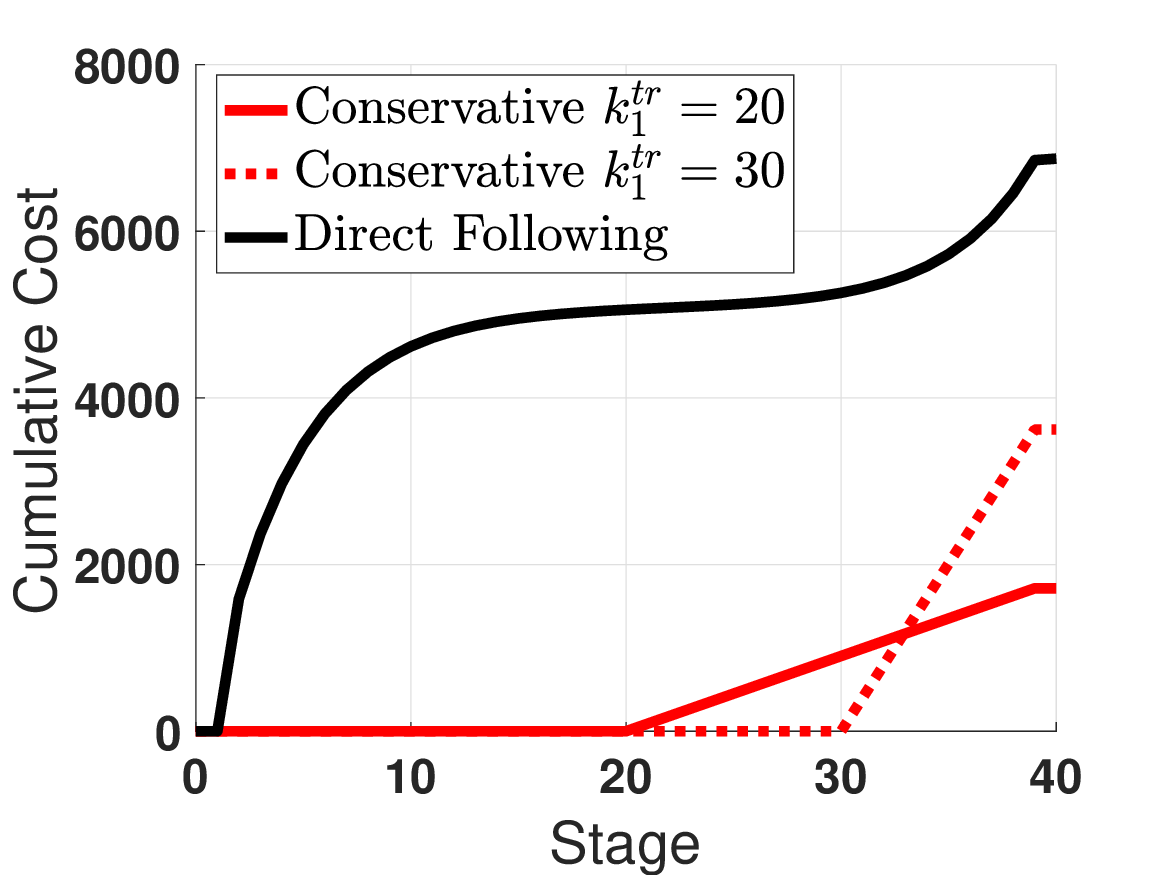}
   \caption{\label{fig:Bench3c} 
  The accumulation of the pursuer's cost $\hat{V}_i^{0:k},\forall k\in\mathcal{K}$, along with stages. 
   }
\end{subfigure}
\caption{
The pursuer's ex-post cumulative cost  under two heuristic policies and the proposed policy. 
\label{fig:bechmark}
}
\end{figure*}
The {conservative policy} avoids potential trajectory deviations under deception but results in less planning stages for the pursuer to achieve the capture goal. 
We visualize the accumulation of the pursuer's cost in Fig. \ref{fig:Bench3c}. The red lines show that the pursuer who adopts the {conservative policy} spends no action costs before the truth-revealing stage $k_1^{tr}$, i.e., $(u^{k}_1)' F^k_{11} (\theta_1)u^{k}_1=0, \forall k\leq k_1^{tr}$,  but huge costs in the remaining stages to fulfill her capture goal. The total cumulative cost $\hat{V}_i^{0:K}$ at the final stage increases exponentially with the value of $k_1^{tr}$ as shown in Fig. \ref{fig:BenchExp}. 
The black line in Fig. \ref{fig:Bench3c} illustrates the accumulation of $\hat{V}_i^{0:k}$ when the pursuer direct follows the evader's trajectory. Only under extreme deception scenarios where $k_1^{tr}>34$, the {direct following} policy results in a lower cost than the {conservative policy} does. 
Since the initial belief $l_1^0$ affects both the truth-revealing stage and the proposed policy, we plot $\hat{V}_i^{0:K}$ versus $l_1^0$ under the {conservative policy} and the proposed policy in Fig. \ref{fig:BenchConservative}. 
When there is no belief mismatch $l_1^0(\theta_2^b|x^0,\theta_1^H)=1$, we have $k_1^{tr}=1$ and the conservative policy is equivalent to the proposed policy. 
As the belief mismatch increases, the  cost $\hat{V}_i^{0:K}$ under the proposed policy (resp. the conservative policy) increases due to the larger deviation along the $x$-axis (resp. the larger $k_1^{tr}$). 
The proposed policy always results in a lower  cost $\hat{V}_i^{0:K}$ than the conservative policy does. 
The results in Fig. \ref{fig:bechmark} lead to the following two principles for the pursuer to behave under deception. 
First, Bayesian learning is a more effective countermeasure than the direct following of the evader's deceptive trajectory. 
Second, if learning the evader's type takes a long time, the pursuer is better to act proactively based on her current belief than to delay actions until the truth-revealing stage.

\subsection{Dynamic Game for Deception and Counter-Deception}
In this section, the evader has a coupled cost\footnote{
A video demo of two UAVs' real-time trajectories and belief updates under the coupled structure is included in the supplementary materials. 
}  defined in Definition \ref{def:decouple} and the level of \textit{evasion determination} increases with a constant rate $\alpha>0$; i.e., $d^k_{21}(\theta_2)=\alpha k, \forall \theta_2\in \Theta_2, \forall k\in \mathcal{K}$. 
The evader deceives the pursuer by hiding his true target. 
The pursuer can adopt the following two countermeasures to reduce her cost under the evader's deception.  
Section \ref{sec:public type} investigates the effectiveness of adaptive learning. We find that the pursuer manages to approach the true target at the final stage by updating her belief and taking actions accordingly based on the real-time trajectory observation. 
 Section \ref{sec:Dep for CounterDep} further allows the pursuer to introduce additional deception, i.e., obfuscate her maneuverability, to counteract the evader's information advantage and his deception impact.  

\subsubsection{Pursuer with a Public Type}
\label{sec:public type}
When the pursuer's type is \textit{common knowledge}, we plot both UAVs' trajectories under two initial beliefs and two types of pursuers  
 in Fig. \ref{fig:plot_gamecontrol}. 
\begin{figure}
\centering
\includegraphics[width=1.05 \columnwidth]{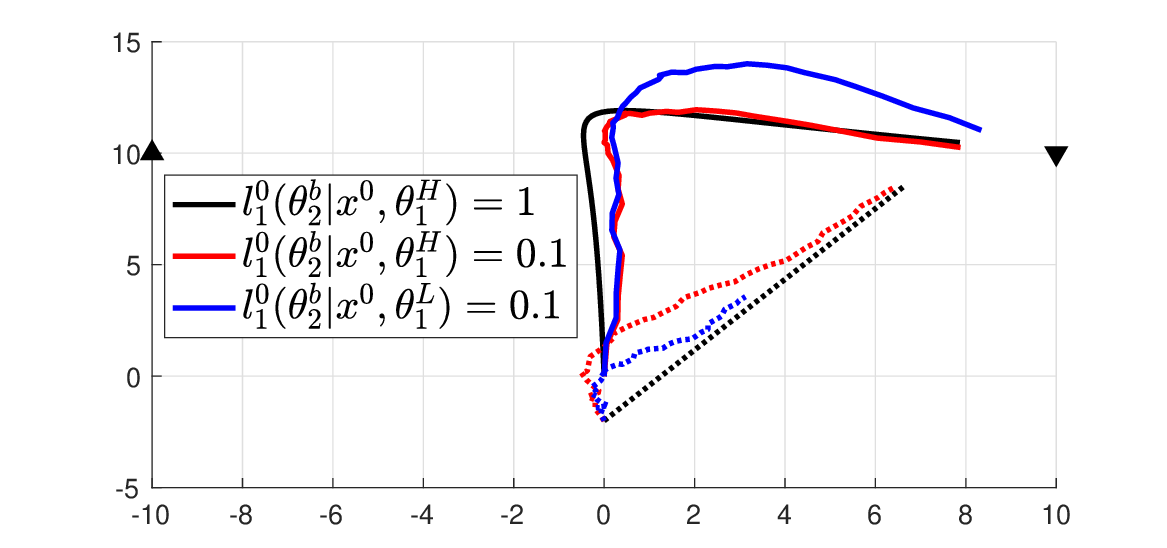}
\caption{ 
The $K$-stage trajectory of the evader and the pursuer in solid and dashed lines, respectively. 
If the evader's type is \textit{common knowledge} and the pursuer is of high-maneuverability, we represent their noise-free trajectories in black. 
If the evader's type is private and the pursuer's initial belief mismatch is $0.9$, two UAVs' trajectories are in red (resp. blue) when the pursuer's maneuverability is high (resp. low). 
}
\label{fig:plot_gamecontrol}
\end{figure}
The solid lines show that  the evader with the coupled cost detours to stay further from the pursuer. 
The initial belief mismatch causes a deviation along the $x$-axis for both high- and  low-maneuverability pursuers as shown in red and blue, respectively. 
However, the deviation has a smaller magnitude and lasts shorter than the one represented by the red line in Fig. \ref{fig:optimalcontrol} due to the coupled cost structure of the evader. 
The  pursuer with a high maneuverability stays closer to the evader at the final stage. 

\subsubsection{Deception to Counteract Deception}
 \label{sec:Dep for CounterDep}
When the pursuer's type is also private, Fig. \ref{fig:deceptionforcounter} shows that she can manipulate the evader's initial belief $l_2^0$ to obtain a smaller $k_1^{tr}$ and a belief update with less fluctuation. 
The red line with stars is the same as the one in Fig. \ref{fig:InitialBelief_optimalcontroltotal}. 
It shows that the pursuer's belief learning is slower and fluctuates more when she interacts with the evader who has a decoupled cost. The reason is that her manipulation of the initial belief $l_2^0$ does not affect the evader's decision making as shown in Corollary \ref{corollary: degeneration}. 
\begin{figure}[t]
    \centering 
\begin{subfigure}{0.5\textwidth}
  \includegraphics[width=\linewidth]{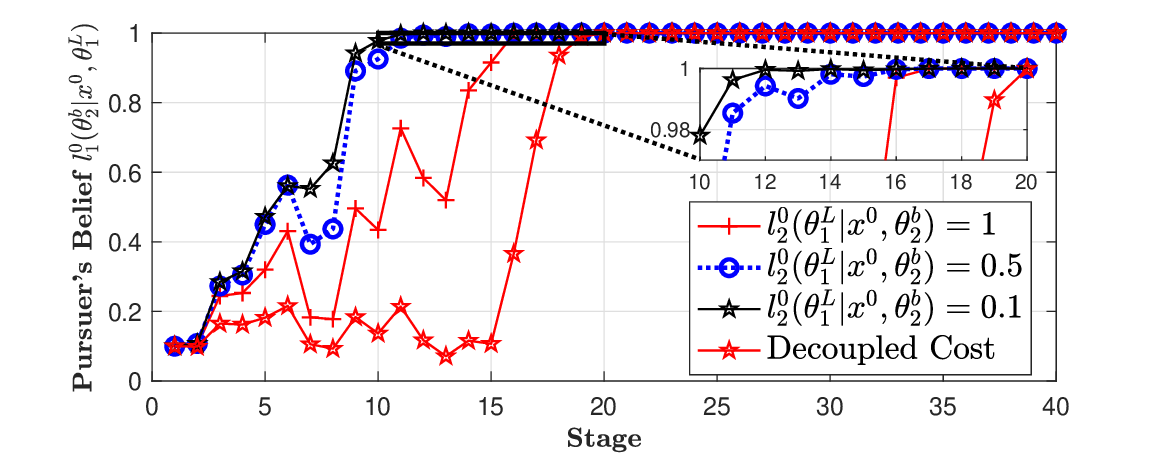}
  \caption{\label{fig:decerivedBelief_LowCap} 
Low-maneuverability pursuer's belief update. }
\end{subfigure}\hfil 
\begin{subfigure}{0.5\textwidth}
  \includegraphics[width=\linewidth]{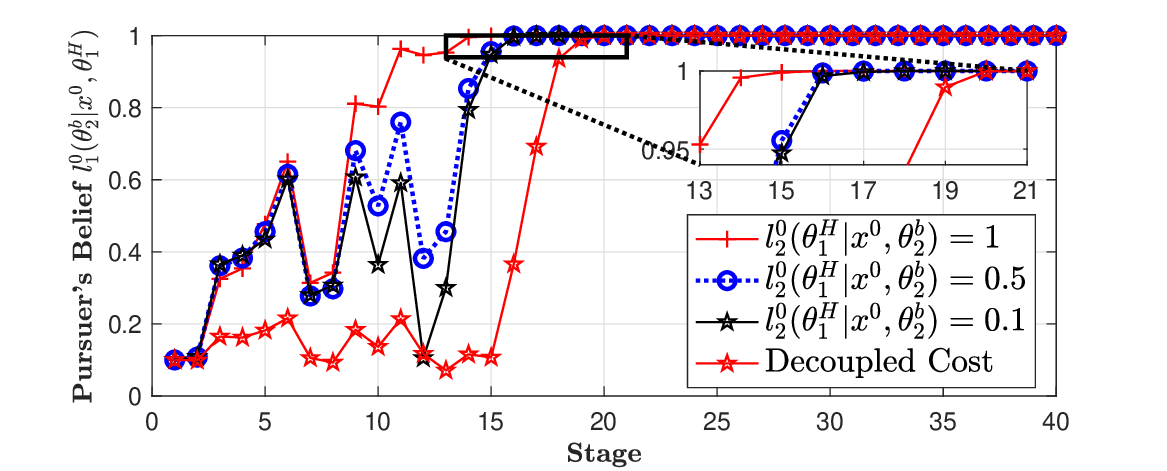}
  \caption{\label{fig:decerivedBelief} 
High-maneuverability pursuer's belief update. 
 }
\end{subfigure}\hfil 
\caption{
The pursuer's belief update over $K$ stages with the same initial belief  $l_1^0(\theta_2^b|x^0,\theta_1)=0.1$. The inset black box magnifies the selected area.  
\label{fig:deceptionforcounter}
}
\end{figure}
A comparison between Fig. \ref{fig:decerivedBelief_LowCap} and Fig. \ref{fig:decerivedBelief} shows that it is beneficial for a low-maneuverability pursuer to disguise as a high-maneuverability pursuer but not vice versa. Thus, introducing additional deception to counteract existing deception is not always effective.  


\subsection{Multi-Dimensional Deception Metrics}
The impact of the evader's deception can be measured by metrics such as the \textit{endpoint distance} $x_2^{fd}:=||x_2^K-\gamma(\theta_2)||_2$ between the evader and the true target, the \textit{endpoint distance} $x_1^{fd}:=||x_2^K-x_1^K||_2$ between two UAVs, both UAVs' truth-revealing stages $k_i^{tr}$, and their ex-post cumulative costs $\hat{V}_i^{0:k}, \forall k\in \mathcal{K}$. 
In this pursuit-evasion case study, we define $\epsilon$-reachability and $\epsilon$-capturability in Definition \ref{def:reach and cap}. 
Although $x_i^{fd}, \forall i\in\{1,2\}$, is a random variable,  
we can obtain a good estimate of the reachability and capturability due to the negligible  variance of $x_i^{fd}$ as shown in Fig. \ref{fig:DMfinalDis} and Fig. \ref{fig:ManeuverabilityDistan}. 
\begin{definition}[\textbf{Reachability and Capturability}]
\label{def:reach and cap}
Consider the proposed pursuit-evasion scenario with a given $\epsilon\geq 0$, a threshold $\bar{x}^{fd}\geq 0$, and all initial beliefs $l_i^0\in (0,1)$. The target is said to be $\epsilon$-reachable if $\Pr(x_2^{fd}\geq \bar{x}^{fd})\leq \epsilon$. The evader is said to be  $\epsilon$-capturable if $\Pr(x_1^{fd}\geq \bar{x}^{fd})\leq \epsilon$. 
\end{definition}

In Section \ref{sec:Manipulation}, we investigate how the evader can manipulate the pursuer's initial belief $l_1^0(\theta_2^b |x^0,\theta_1^H)$ to influence the deception. 
In Section \ref{sec:Maneuverability}, we investigate how the pursuer's maneuverability plays a role in deception.  
In both sections, the evader has a coupled cost structure. 
The pursuer  either applies the Bayesian update or not, which is denoted by blue and red lines, respectively, in both Fig. \ref{fig:DM} and Fig. \ref{fig:Maneuverability}. 
\textcolor{black}{In Section \ref{subsect:DDandPoD}, we study other metrics, such as deceivability, distinguishability, and PoD.}  

\subsubsection{The Impact of the Evader's Belief Manipulation}
\label{sec:Manipulation}
Both UAVs determine their initial beliefs  based on the intelligence collected before their interactions. 
By falsifying the pursuer's  intelligence, the evader can manipulate the pursuer's initial belief $l_1^0$ and further influence the deception as shown in Fig. \ref{fig:DM}. 
\begin{figure}
  \centering 
\begin{subfigure}{0.22\textwidth}
 \centering 
  \includegraphics[width=1.06\linewidth]{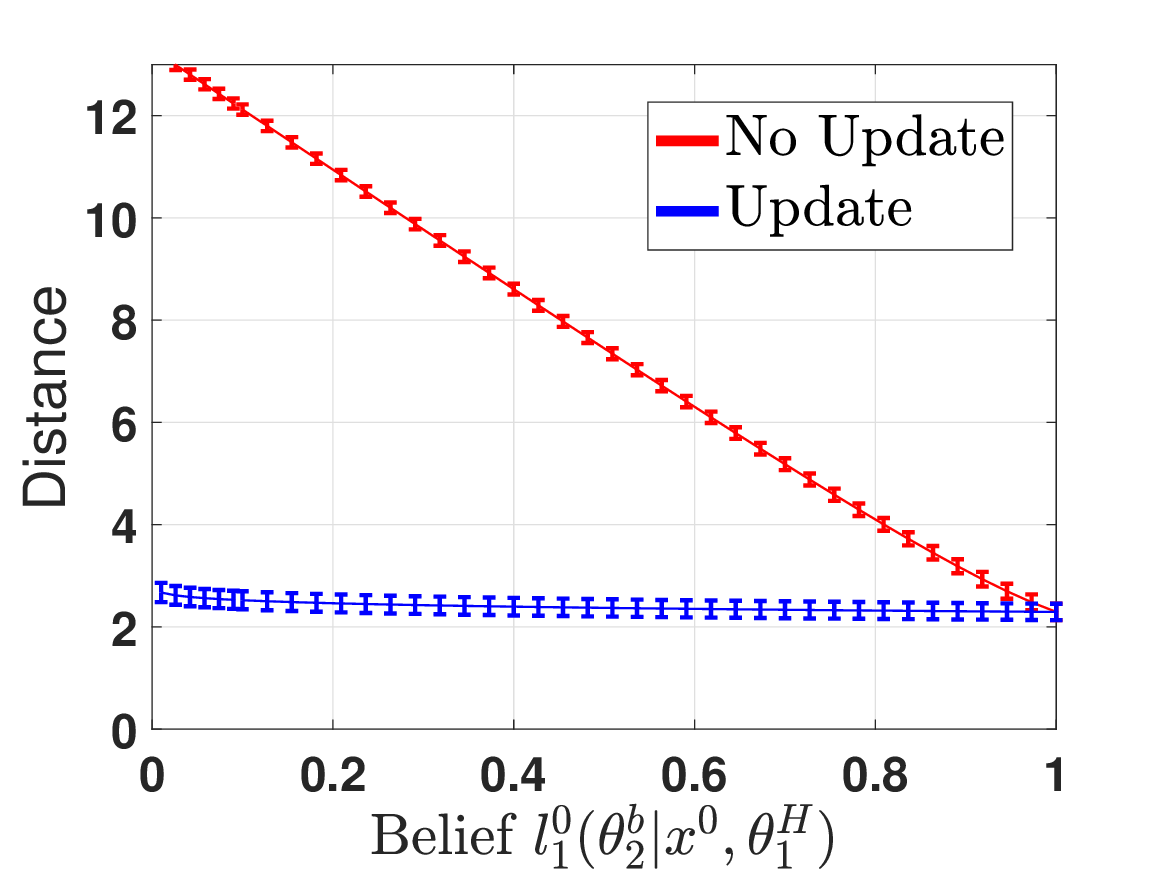}
  \caption{\label{fig:DMfinalDis} 
  Distance $x_1^{fd}$ with its variance magnified by $100$ times. }
\end{subfigure} \hfil 
\begin{subfigure}{0.22\textwidth}
  \includegraphics[width=1.065\linewidth]{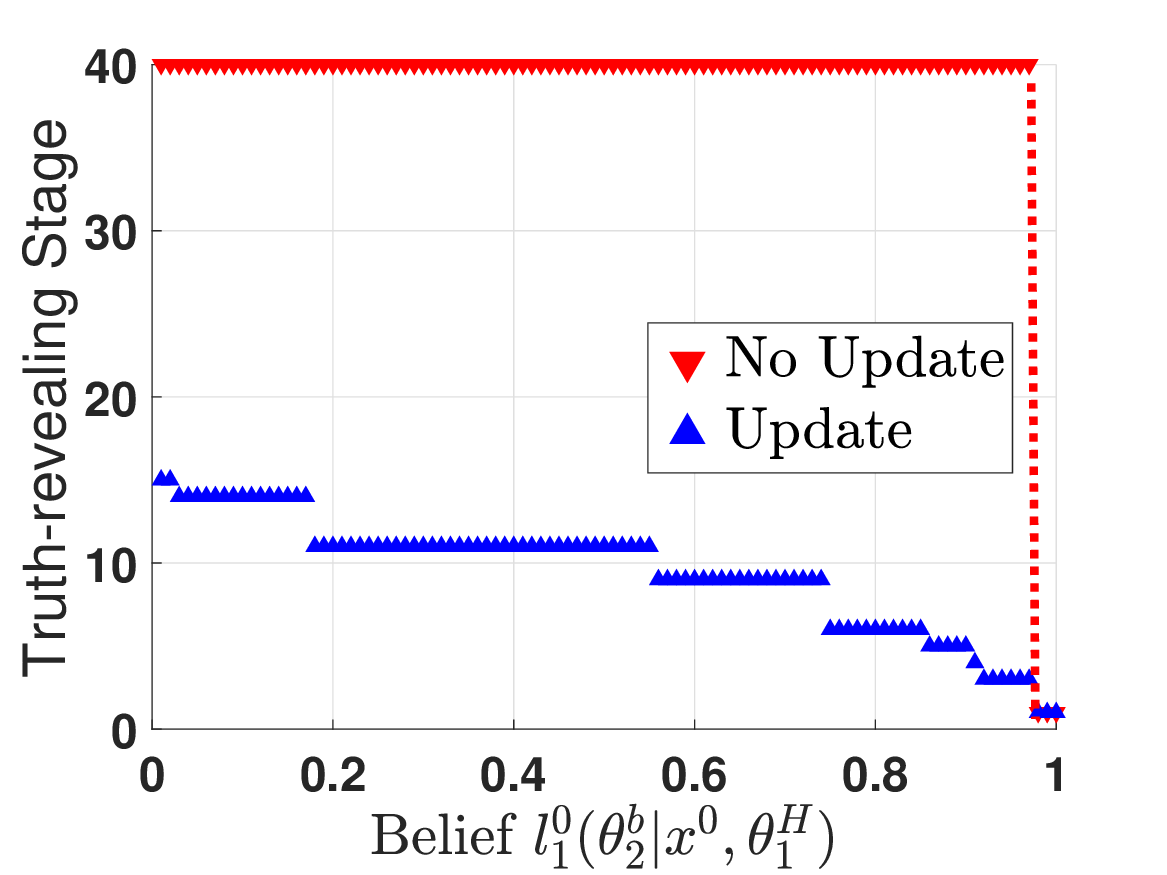}
  \caption{\label{fig:DMtruthrevealing} 
A realization of the pursuer's truth-revealing stage ${k}_1^{tr}$. 
 }
\end{subfigure}\hfil 
\medskip
\begin{subfigure}{0.22\textwidth}
 \includegraphics[width=1.06\linewidth]{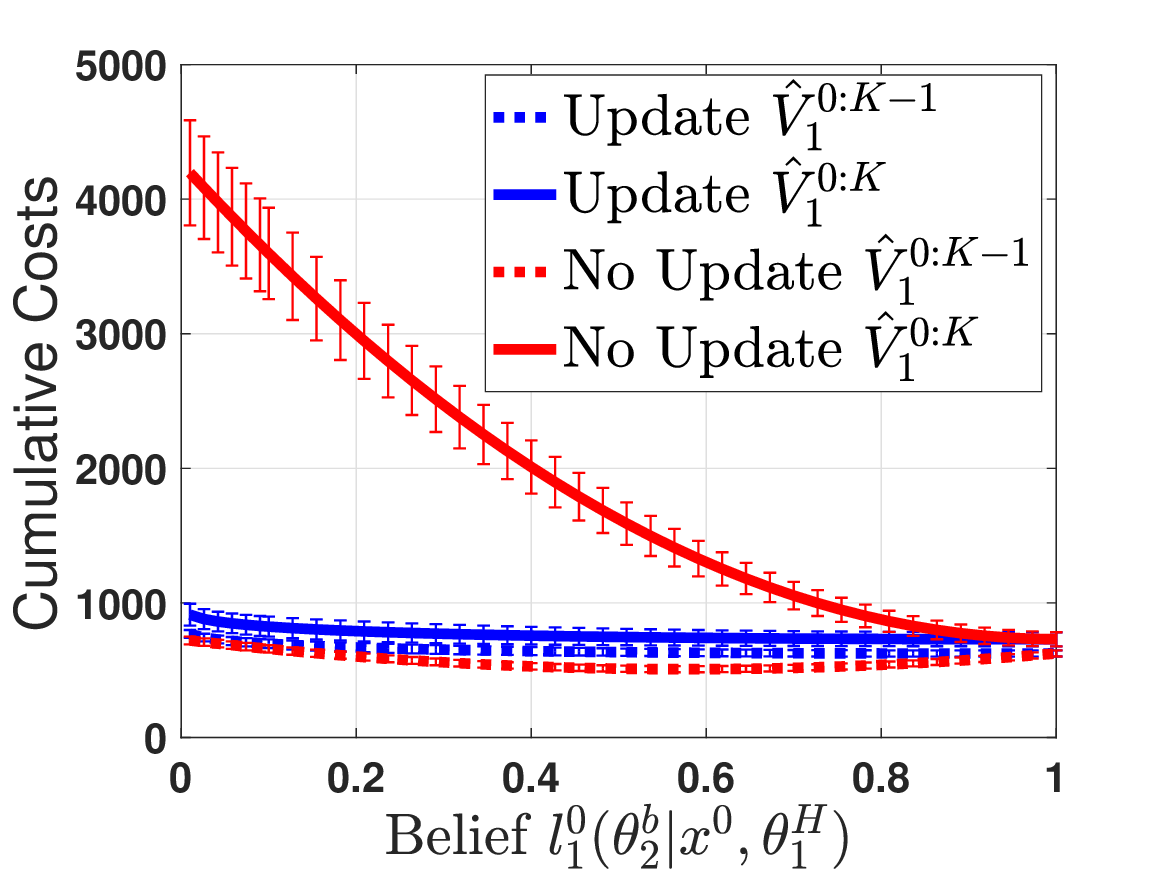}
   \caption{\label{fig:DMprogressiveCost} 
The  costs $\hat{V}_1^{0:K-1}$ and $\hat{V}_1^{0:K}$ of the pursuer under type $\theta_1^H$. 
 }
\end{subfigure}\hfil 
\begin{subfigure}{0.22\textwidth}
  \includegraphics[width=1.06\linewidth]{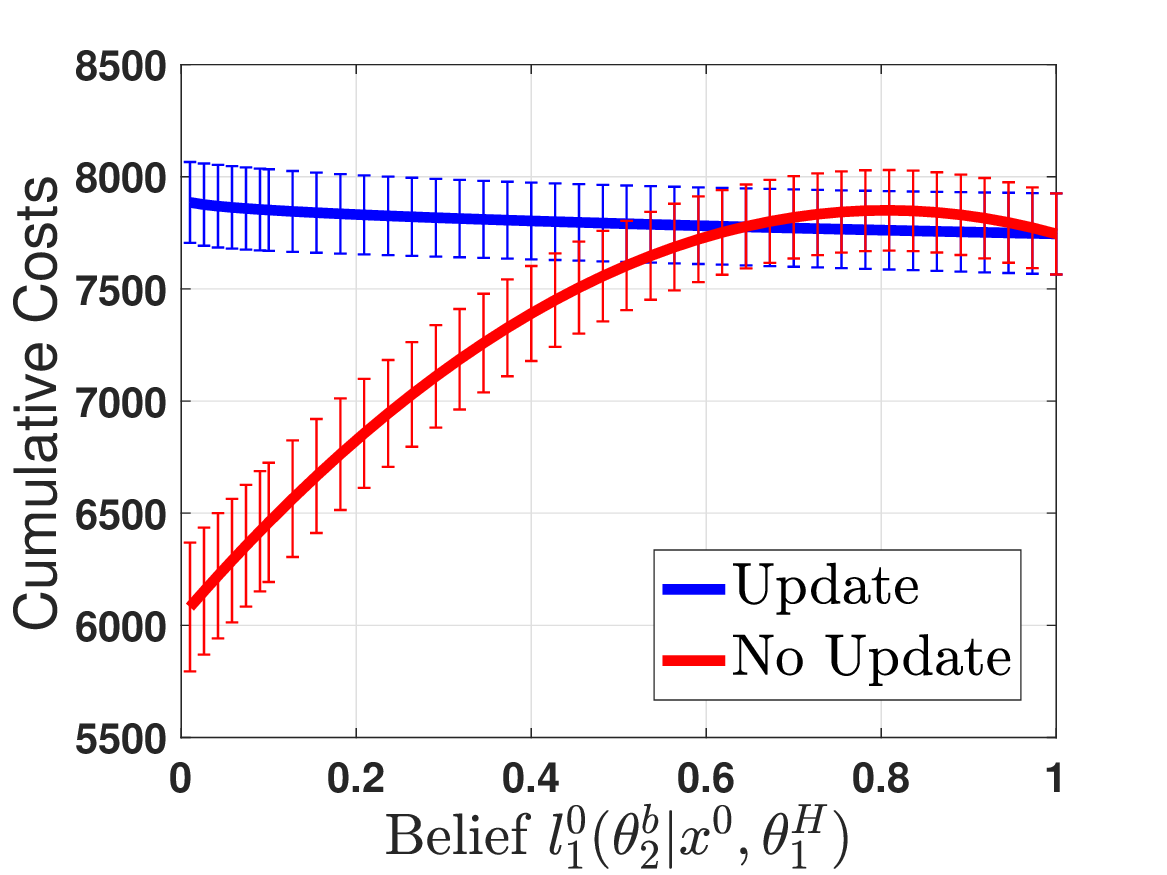}
  \caption{\label{fig:DMratio} 
The evader's $K$-stage ex-post cumulative cost $\hat{V}_2^{0:K}$. 
 }
\end{subfigure}
\caption{
The influence of the initial belief mismatch on deception.
Error bars represent variances of the random variables. 
}
\label{fig:DM}
\end{figure}
In the  $x$-axis, an initial belief $l_1^0(\theta_2^b |x^0,\theta_1^H)$ closer to $1$ indicates a smaller belief mismatch. 
 Fig. \ref{fig:DMfinalDis} shows that the pursuer's distance to the evader at the final stage decreases as the belief mismatch decreases regardless of the existence of Bayesian learning. 
However, the initial belief manipulation has a much less influence on the endpoint distance $x_1^{fd}$ when Bayesian learning is applied. 
Fig. \ref{fig:DMtruthrevealing} shows that for each realization of the noise sequence $w^k$, the pursuer's truth-revealing stage  steps down as the belief mismatch decreases when Bayesian update is applied. 
Fig. \ref{fig:DMprogressiveCost} illustrates the pursuer's ex-post cumulative cost $\hat{V}_1^{0:K}$ and $\hat{V}_1^{0:K-1}$ at the last and the second last stage, respectively. 
Without Bayesian update, 
the evader's deception significantly increases the pursuer's cost at the second last stage due to the large endpoint distance $x_1^{fd}$. 
The red lines show that the cost increase is higher under a larger belief mismatch. 
 Fig. \ref{fig:DMratio} illustrates the evader's ex-post cumulative cost at the last stage. If the pursuer does not apply Bayesian learning, then the evader can decrease his cost by increasing the pursuer's belief mismatch. 
 If the pursuer applies Bayesian learning, then the evader's cost increases slightly if the pursuer's belief mismatch is increased. 
When the belief mismatch is small (i.e., $1-l_1^0\in (0,0.35)$), we observe a win-win situation; i.e., Bayesian learning not only reduces the pursuer's ex-post cumulative cost, but also the evader's. 

\subsubsection{The Impact of the Pursuer's Maneuverability}
\label{sec:Maneuverability}
The pursuer's maneuverability can also affect deception as shown in Fig. \ref{fig:Maneuverability}. 
\begin{figure}
    \centering 
\begin{subfigure}{0.22\textwidth}
  \includegraphics[width=1.06\linewidth]{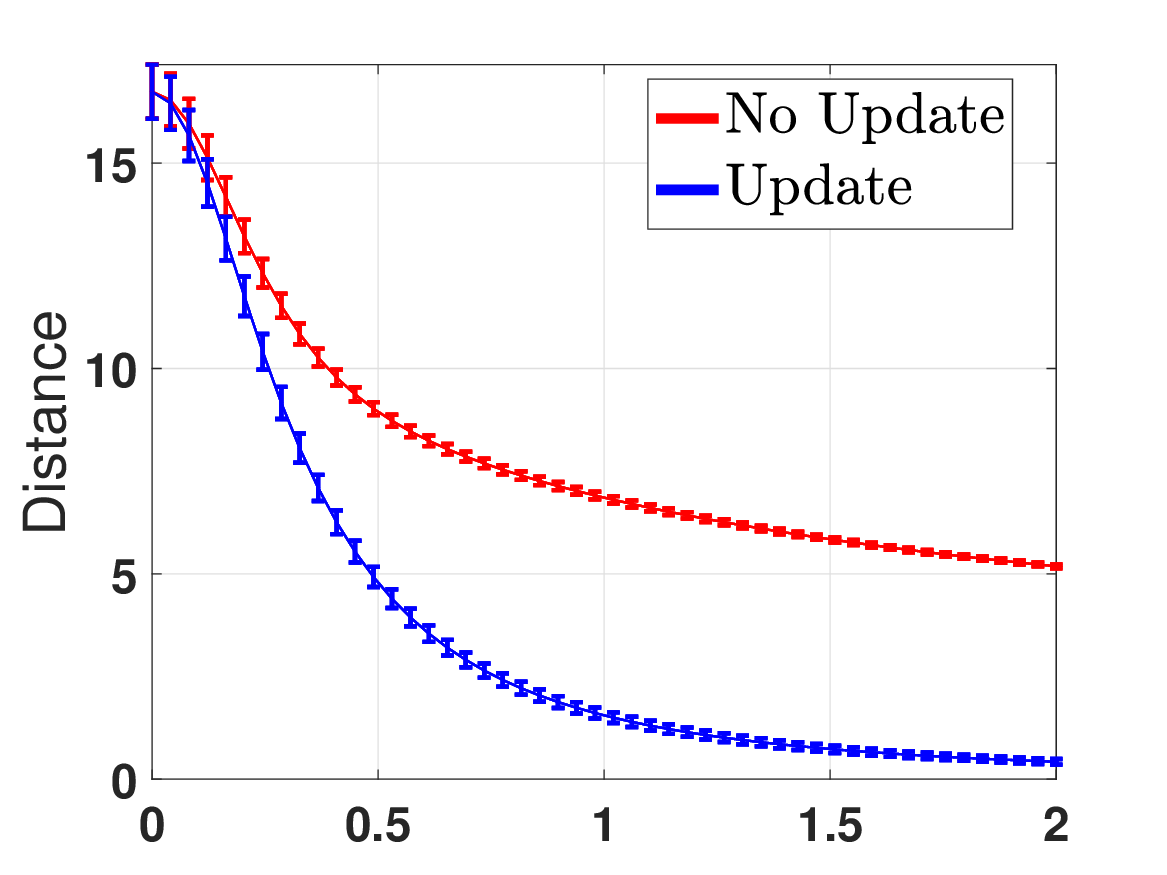}
  \caption{\label{fig:ManeuverabilityDistan} 
Distance $x_1^{fd}$ with its variance magnified by $100$ times. }
\end{subfigure}\hfil 
\begin{subfigure}{0.22\textwidth}
  \includegraphics[width=1.06\linewidth]{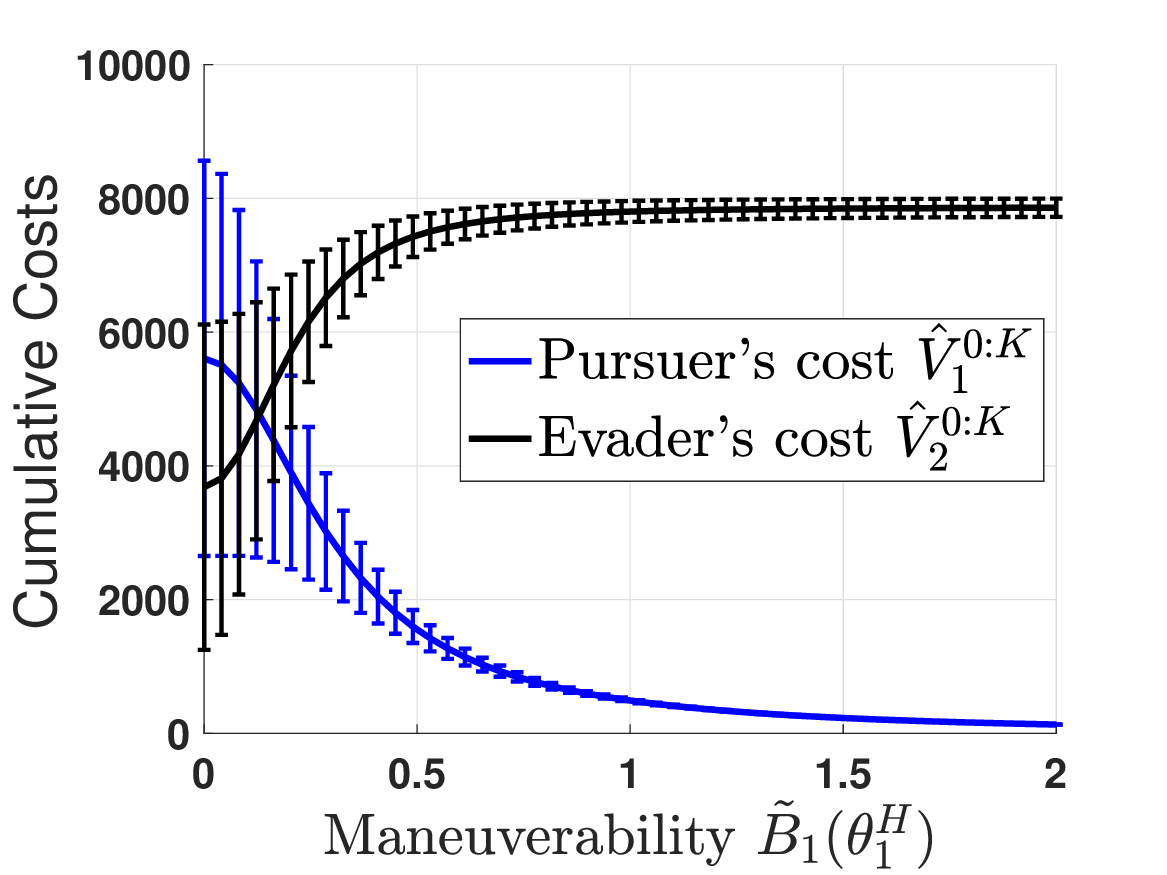}
  \caption{\label{fig:Maneuverabilityvalue} 
Two UAVs' $K$-stage costs $\hat{V}_1^{0:K}$ and $\hat{V}_2^{0:K}$. 
 }
\end{subfigure}\hfil 
\caption{
The influence of the pursuer's maneuverability on deception. Error bars represent variances of the random variables. 
\label{fig:Maneuverability}
}
\end{figure}
The pursuer has an initial belief $l_1^0(\theta_2^b |x^0,\theta_1^H)=0.5$ and the evader knows the pursuer's type. 
Fig. \ref{fig:ManeuverabilityDistan} illustrates that the pursuer can 
exponentially decrease her distance to the evader at the final stage as her maneuverability increases. 
Fig. \ref{fig:Maneuverabilityvalue} demonstrates that the  maneuverability increase can decrease and increase the pursuer's and the evader's ex-post cumulative \textcolor{black}{costs} at the final stage, respectively. 
The variance grows as maneuverability decreases because the pursuer's trajectory will become largely affected by the external noise. 
In both figures, we observe the phenomenon of \textcolor{black}{the} \textit{marginal effect}; i.e., the change rates of both the endpoint distance $x_1^{fd}$ and the cost $\hat{V}_i^{0:K}$ decrease as the maneuverability increases. 
Thus, we conclude that higher maneuverability can improve the pursuer's performance under the evader's deception as measured by the distance $x_1^{fd}$ and the cost $\hat{V}_1^{0:K}$. 
Moreover, the improvement rate is higher with low maneuverability. 

\subsubsection{Deceivability, Distinguishability, and PoD}
\label{subsect:DDandPoD}
Deceivability defined in Definition \ref{def:Deceviability} is highly related to the distinguishblity among different types. 
In this case study, a larger distance between targets, i.e.,  $||\gamma(\theta_2^g)-\gamma(\theta_2^b)||_2$, makes it easier for the pursuer to distinguish \textcolor{black}{between} evaders of type $\theta_2^b$ and  \textcolor{black}{type} $\theta_2^g$. 
A larger maneuverability difference $|\tilde{B}_1(\theta_1^H)-\tilde{B}_1(\theta_1^L)|$ makes it easier for the evader to distinguish \textcolor{black}{between} pursuers of type $\theta_1^H$ and  \textcolor{black}{type} $\theta_1^L$. 
We visualize two UAVs' truth-revealing stages $k_i^{tr}$ versus the distance between targets and the maneuverability difference in Fig. \ref{fig:deceivability}. The evader has a coupled cost and both players' initial belief \textcolor{black}{mismatches are} $0.5$. The dashed black line indicates $\tilde{B}_1(\theta_1^L)=0.3$. 
\begin{figure}
\centering
\includegraphics[width=1 \columnwidth]{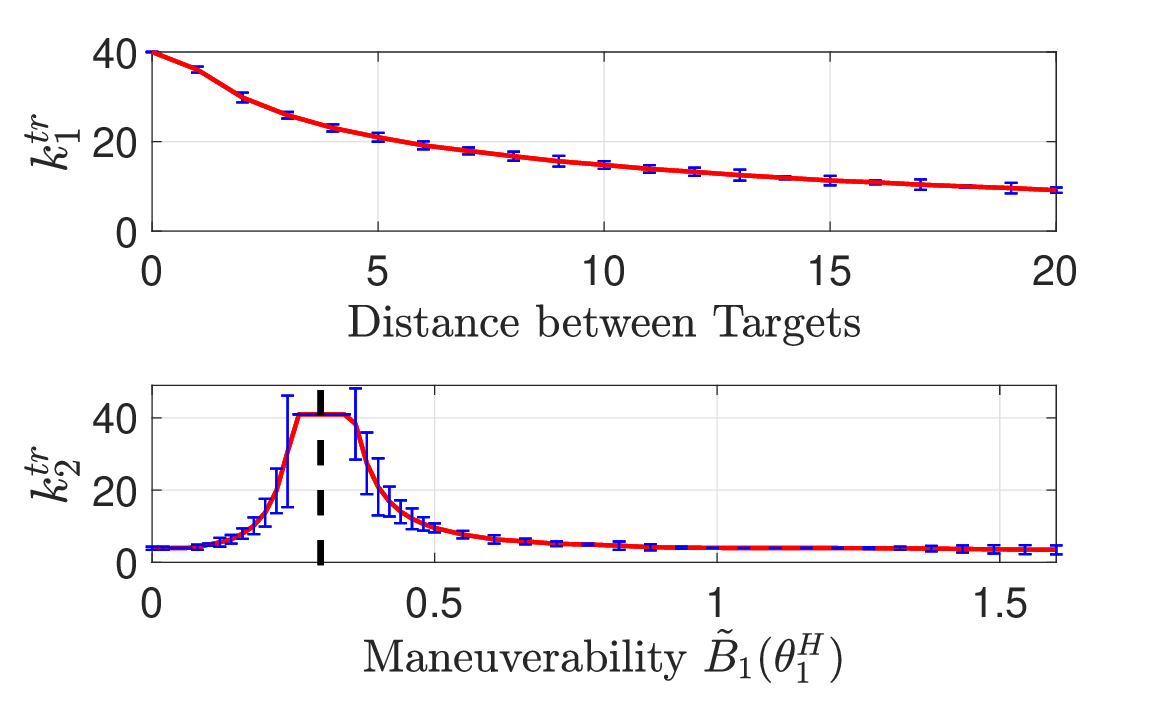}
\caption{ 
The plot of the deceived robot's truth-revealing stage versus the deceiver's type distinguishability. Error bars represent their variances, which are magnified by $5$ times. 
}
\label{fig:deceivability}
\end{figure}
When the maneuverability difference is negligible $\tilde{B}_1(\theta_1^H)\in (0.26, 0.36)$, the pursuer's type cannot be learned correctly in $K$ stages; i.e., the pursuer is $(K+1)$-stage $0$-deceivable. 
When the maneuverability difference is small, i.e., $\tilde{B}_1(\theta_1^H)\in (0.1, 0.5)$, yet not negligible, i.e., $\tilde{B}_1(\theta_1^H)\notin (0.26, 0.36)$, the variance of $k_2^{tr}$ is large.

Let $\theta_2=\theta_2^b$ be \textit{common knowledge} and assume that the evader's belief confirms to the prior distribution of the pursuer's type for all stages, i.e., $l_2^k(\theta_1|h^k,\theta^b)=\Xi_1(\theta_1), \forall \theta_1\in \Theta_1, \forall k\in \mathcal{K}$. Then, Fig. \ref{fig:newPoD} illustrates how the prior distribution of the pursuer's type \textcolor{black}{affects} the value of PoD under three scenarios: 
\begin{itemize}
    \item $\eta_1=1$, i.e., the central planner only evaluates UAV $1$'s performance under deception. 
    \item $\eta_1=0$, i.e., the central planner only evaluates UAV $2$'s performance under deception. 
    \item $\eta_1=0.5$, i.e., the central planner evaluates the average performance of two UAVs under deception. 
\end{itemize}
When the pursuer's type is also \textit{common knowledge}, i.e., 
$\Xi_1(\theta_1^H)=0$ (i.e., the pursuer has type $\theta_1^L$) and $\Xi_1(\theta_1^H)=1$ (i.e., the pursuer has type $\theta_1^H$), the game is of complete information and the value of PoD equals $1$. 
\textcolor{black}{Since PoD takes continuous values over  $\Xi_1(\theta_1^H)\in [0,1]$ and has a value of $1$ at two endpoints for all feasible $\eta_1$, we refer to the plots in Fig. \ref{fig:newPoD} as \textit{jump rope} plots.} 
\begin{figure}
\centering
\includegraphics[width=1 \columnwidth]{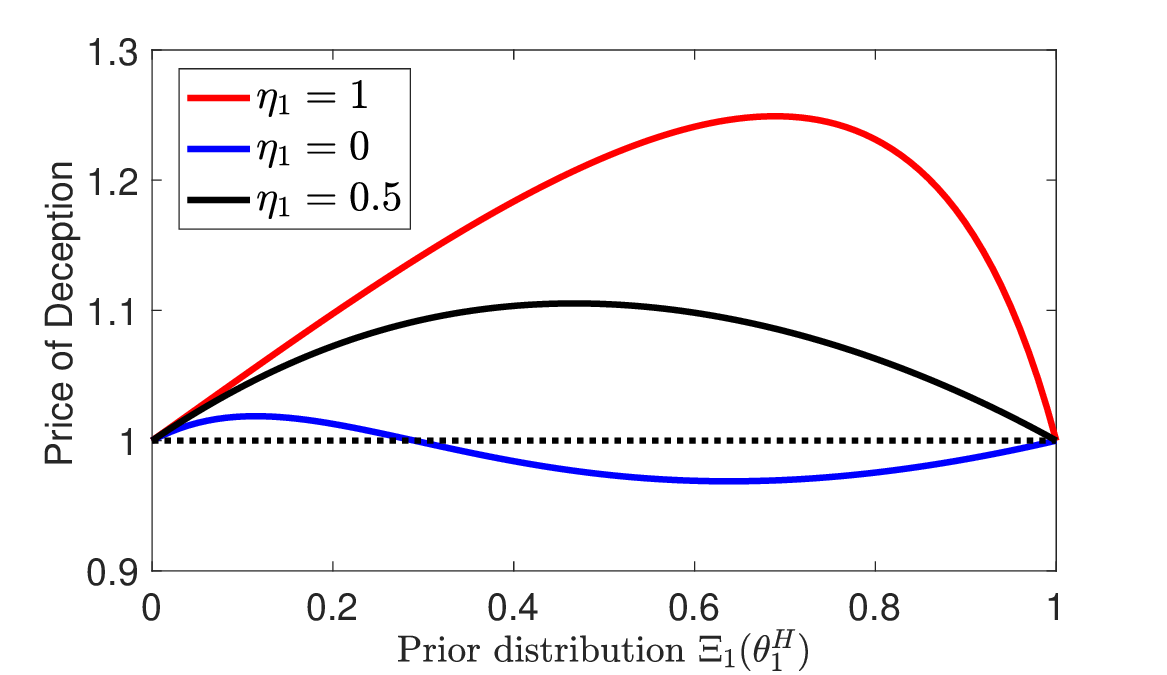}
\caption{ 
PoD vs. prior type distribution for three values of $\eta_1$. 
}
\label{fig:newPoD}
\end{figure}
They corroborate that the PoD can be bigger than $1$; i.e., deception among players may not only benefit the deceiver but also the deceivee.

\section{Conclusion and Future Work}
\label{sec:conclusion}
We have investigated a novel class of \textit{rational robot deception} problems where intelligent robots hide their heterogeneous private information to achieve their objectives  in finite stages with minimum costs. 
We have proposed an $N$-player dynamic game framework to quantify the impact of deception and design long-term optimal actions for deception and counter-deception. 
Robots form their own initial beliefs on others' private information and {update their beliefs at each stage based on extrinsic or intrinsic information.} 
Satisfying the properties of \textit{sequential rationality} and \textit{belief consistency}, perfect Bayesian Nash equilibrium can be used to predict $N$ robots' actions and costs over the $K$ stages. 
We have studied a class of games in the linear-quadratic form with {extrinsic belief dynamics}
to obtain a unique affine state-feedback control policy and a set of extended Riccati equations. 
The \textcolor{black}{cognitive} coupling resulted from the deception of types demonstrates a distinct feature of rational deception where each player's action hinges on not only his own belief but also all other players' beliefs. 
The concepts of \textit{deceivability}, \textit{distinguishability}, and \textit{reachability} have been defined to characterize the fundamental limits of deception.  \textcolor{black}{Meanwhile}, the \textit{price of deception} serves as a crucial evaluation and design metric. 

We have investigated a target protection problem where the evader aims to  deceptively reach the true target and the pursuer keeps her maneuverability as private information. 
The pursuer achieves a lower ex-post cumulative cost under the proposed policy  than under the direct-following and conservative policies. 
We have proposed multi-dimensional metrics such as the stage of  truth revelation and the endpoint distance to measure the deception impact throughout stages. We have concluded that Bayesian learning can largely reduce the impact of initial belief manipulation and sometimes result in a win-win situation.  The increase of the pursuer's maneuverability can also reduce the endpoint distance and her ex-post cumulative cost yet has a marginal effect. 
A robot is more \textit{deceivable}, i.e., less \textit{learnable} when its potential type is less \textit{distinguishable}. 
Finally, we have found that introducing additional deception to counteract existing deception is not always effective. 
Moreover, deception among multiple players may not only benefit the deceiver but also the deceivee.  




%



\ifCLASSOPTIONcaptionsoff
  \newpage
\fi


\bibliographystyle{IEEEtran}
\bibliography{IEEEabrv,RobotDecption}

\begin{IEEEbiography}[{\includegraphics[width=1in,height=1.25in,clip,keepaspectratio]{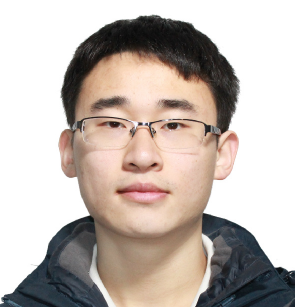}}]{Linan Huang}
(S'16) received the B.Eng. degree (Hons.) in Electrical Engineering from Beijing Institute of Technology, China, in 2016. He is currently pursuing the Ph.D. degree with the Laboratory for Agile and Resilient Complex Systems, Tandon School of Engineering, New York University, NY, USA. His research interests include dynamic decision making of the multi-agent system, mechanism design, artificial intelligence, security and resilience for the cyber-physical systems. 
\end{IEEEbiography}
\begin{IEEEbiography}[{\includegraphics[width=1in,height=1.25in,clip,keepaspectratio]{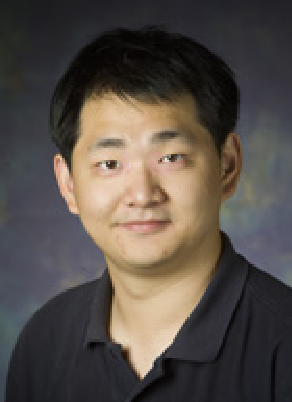}}]{Quanyan Zhu}
(SM’02-M’14) received B. Eng. in Honors Electrical Engineering from McGill University in 2006, M. A. Sc. from the University of Toronto in 2008, and Ph.D. from the University of Illinois at Urbana-Champaign (UIUC) in 2013. After stints at Princeton University, he is currently an associate professor at the Department of Electrical and Computer Engineering, New York University (NYU). He is an affiliated faculty member of the Center for Urban Science and Progress (CUSP) and Center for Cyber Security (CCS) at NYU. His current research interests include game theory, machine learning, cyber deception, and cyber-physical systems.
\end{IEEEbiography}

\end{document}